\newtheorem{theorem}{Theorem}
\newtheorem{lemma}[theorem]{Lemma}
\newtheorem{proposition}[theorem]{Proposition}
\newcommand{\tuple}[1]{\langle #1 \rangle}
\newcommand{\emptyword}[0]{\varepsilon}
\newcommand{\trans}[1]{\mathchoice{\xrightarrow{#1}}{\xrightarrow{\smash{\lower1pt\hbox{$\scriptstyle #1$}}}}{\xrightarrow{#1}}{\xrightarrow{#1}}}
\newcommand{\future}[1]{\mathsf{F}(#1)}
\newcommand{\clofut}[1]{\overline{\mathsf{F}}(#1)}
\newcommand{\bifut}[1]{\mathsf{F}_2(#1)}
\newcommand{\clobifut}[1]{\overline{\mathsf{F}}_2(#1)}
\newcommand{\past}[1]{\mathsf{P}(#1)}
\newcommand{\fact}[1]{\operatorname{fact}(#1)}
\newcommand{\supp}[1]{\operatorname{supp}(#1)}
\begin{document}

\title{Ambiguity through the lens of measure theory}
\author{Olivier Carton}
\date{\today}
\maketitle

\begin{abstract}
  In this paper, we establish a strong link between the ambiguity for
  finite words of a Büchi automaton and the ambiguity for infinite words of
  the same automaton.  This link is based on measure theory.  More
  precisely, we show that such an automaton is unambiguous, in the sense
  that no finite word labels two runs with the same starting state and the
  same ending state if and only if for each state, the set of infinite
  sequences labelling two runs starting from that state has measure zero.
  The measure used to define these negligible sets, that is sets of measure
  zero, can be any measure computed by a weighted automaton which is
  compatible with the Büchi automaton.  This latter condition is very
  natural: the measure must only put weight on sets $wA^ℕ$ where $w$
  is the label of some run in the Büchi automaton.
\end{abstract}

\section{Introduction}

The relationship between deterministic and non-deterministic machines has
been extensively studied since the very beginning of computer
science. Despite these efforts, many questions remain wide open.  This is
of course true in complexity theory for questions like P versus NP but also
in automata theory \cite{HolzerKutrib10,Colcombet15}.  It is for instance
not known whether the simulation of non-deterministic either one-way or
two-way automata by deterministic two-way automata requires an exponential
blow-up of the number of states \cite{Pighizzini12}.

Unambiguous machines are usually defined as non-deterministic machines in
which each input has at most one accepting run.  They are intermediate
machines in between the two extreme cases of deterministic and
non-deterministic machines. The notion of ambiguity considered in the paper
is slightly stronger and more structural as it does not depend on initial
and final states.  In the case of automata accepting finite words,
non-deterministic automata can be exponentially more succinct than
unambiguous automata which can be, in turn, exponentially more succinct
than deterministic automata \cite{Schmidt78}.  However, the problem of
containment for unambiguous automata is tractable is polynomial time
\cite{StearnsHunt85} like for deterministic automata while the same problem
for non-deterministic automata is PSPACE-complete
\cite[Section~10.6]{AhoHopcroftUllman74}.

The polynomial time algorithm for the containment of unambiguous automata
accepting finite words in~\cite{StearnsHunt85} is based on a clever
counting argument which cannot easily be adapted to infinite words. It is
still unknown whether the containment problem for unambiguous Büchi
automata can be solved in polynomial time.  The problem was solved
in~\cite{IsaakLoeding12} for sub-classes of Büchi automata with weak
acceptance conditions and in~\cite{BousquetLoeding10} for prophetic Büchi
automata introduced in \cite{CartonMichel03} (see also
\cite[Sec.~II.10]{PerrinPin04}) which are strongly unambiguous.  These
latter results are obtained through reductions of the problem for infinite
words to the problem for finite words.  The main result of this paper can
be seen as a step towards a solution for all Büchi automata as it connects
ambiguity for infinite words to ambiguity for finite words.

The aim of this paper is to exhibit a strong link between the ambiguity of
some automaton for finite words and the ambiguity of the same automaton for
infinite words.  The paper is focused on strongly connected Büchi automata.
Two examples given in the conclusion show that the problem is more involved
for non strongly connected automata.  It turns out that unambiguity for
infinite words implies the unambiguity for finite words but the converse
does not hold in general.  This converse can however be recovered if
unambiguity for infinite words is considered up to a negligible set of
inputs.  \emph{Negligible} should here be understood as a set of zero
measure.  The measure used to characterize ambiguity must fulfill some
compatibility conditions with the automaton.  Some examples show these
conditions cannot be avoided.  Note that measures were already used to
characterize maximal variable-length codes
\cite[Thm.~5.10]{BerstelPerrin84} which are, in essence, a combinatorial
definition of non-ambiguity.

The first step of the proof is to show that the measure of the set of
accepted sequences does not increase if all states of the automaton are
made final.  This result is interesting by itself but it also reduces the
proof of our result to automata with all states final.  Since initial
states are also not relevant, the problem is again reduced to automata with
all states initial and final, which accept the so called shift spaces from
symbolic dynamics \cite{LindMarcus95}.  This special case is handled using
techniques from this domain like synchronizing words and Fisher covers.

This work was motivated by questions about automata with outputs also known
as transducers.  These transducers realize functions mapping infinite
sequences to infinite sequences and the questions are focused on the long
term behaviour.  A natural question is the preservation of normality where
normality is the property that all blocks of the same length occur with the
same limiting frequency~\cite{Carton22}.  Normality was introduced by Borel
to formalize the most basic form of randomness for real
numbers~\cite{Borel09}.  It turns out that normality can be characterized
by non-compressibility by transducers realizing one-to-one
functions~\cite{BecherCarton18}.  Since each infinite run ends in a
strongly connected component, it is sufficient to study ambiguity of
strongly connected automata.  It is a classical result that each function
realized by a transducer can be realized by a transducer whose input
automaton is unambiguous~\cite{ChoffrutGrigorieff99}.  The result proved in
this paper shows that if all states of a strongly connected unambiguous
transducer are made final the transducer remains unambiguous up to a set of
measure zero.  It allows us to use, for instance, the ergodic theorem for
Markov chains where the function must be defined up to a set of measure
zero.

The paper is organized as follows.  Section~\ref{sec:basic} is devoted to
basic definitions needed for the main result which is stated in
Section~\ref{sec:result}.  The first step of the proof is to reduce the
problem to the special case of Büchi automata with all states final.  This
is done in Section~\ref{sec:reduction}.  The proof of this special case is
carried out in Section~\ref{sec:proof}.

\section{Definitions} \label{sec:basic}

\subsection{Words, sequences and measures}

Let $A$ be a finite set of \emph{symbols} that we refer to as the
\emph{alphabet}. We write $A^ℕ$ for the set of all sequences on the
alphabet~$A$ and $A^*$ for the set of all (finite) words.  The length of a
finite word $w$ is denoted by $|w|$.  The positions of sequences and words
are numbered starting from~$1$.  The empty word is denoted by~$\emptyword$.
The cardinality of a finite set~$E$ is denoted by~$\#E$.  A \emph{factor}
of a sequence $a_1a_2a_3 ⋯$ is a finite word of the form $a_ka_{k+1}⋯
a_{ℓ-1}$ for integers $1 ⩽ k ⩽ ℓ$ where $k = ℓ$ yields the empty
word~$\emptyword$.  We let $\fact{X}$ denote the set of factors of a
set~$X$ of sequences.

We recall here a few notions of topology.  The set~$A^ℕ$ of sequences can be
endowed with a topology by the distance~$d$ which is defined as follows.
The distance $d(x,y)$ of two sequences $x = a_1a_2a_3⋯$ and $y =
b_1b_2b_3⋯$ is zero if $x = y$ and is $2^{-\min\{ i: a_i ≠ b_i\}}$
otherwise.  The set~$X$ is \emph{open} if it is equal to a possibly
infinite union of \emph{cylinders}, that is, sets of the form~$wA^ℕ$ for $w
∈ A^*$.  It is \emph{closed} if its complement in~$A^ℕ$ is open.  Each
set~$X$ is contained in a smallest closed set~$\overline{X}$ called its
\emph{closure}.  The complement of~$\overline{X}$ is the union of all sets
$wA^ℕ$ which are disjoint from~$X$.

We present here the key notion of a measure.  It is seen as a function from
finite to real numbers in $[0; 1]$ which assigns a measure to each cylinder
set $wA^ℕ$.  Then it is extended as a measure of sets of infinite sequences
by Carathéodory extension theorem. A \emph{probability measure on $A^*$} is
a function $μ : A^* → [0,1]$ such that $μ(\emptyword) = 1$ and that the
equality
\begin{displaymath}
  ∑_{a ∈ A}{μ(wa)} = μ(w)
\end{displaymath}
holds for each word $w ∈ A^*$.  The simplest example of a probability
measure is a \emph{Bernoulli measure}.  It is a monoid morphism from~$A^*$
to~$[0,1]$ (endowed with multiplication) such that
$∑_{a ∈ A}{μ(a)} = 1$.  Among the Bernoulli measures is the
\emph{uniform measure} which maps each word $w ∈ A^*$ to $(\#A)^{-|w|}$.
In particular, each symbol~$a$ is mapped to $μ(a) = 1/\#A$.

By the Carathéodory extension theorem, a measure~$μ$ on~$A^*$ can be
uniquely extended to a probability measure~$\hat{μ}$ on~$A^ℕ$ such that
$\hat{μ}(wA^ℕ) = μ(w)$ holds for each word $w ∈ A^*$.  In the rest of the
paper, we use the same symbol for $μ$ and~$\hat{μ}$.  A probability
measure~$μ$ is said to be \emph{(shift) invariant} if the equality
\begin{displaymath}
  ∑_{a ∈ A}{μ(aw)} = μ(w)
\end{displaymath}
holds for each word $w ∈ A^*$.  The support $\supp{μ}$ of a measure~$μ$
is the set $\supp{μ} = \{ w ∈ A^* : μ(w) > 0 \}$ of finite words.

The column vector such that each of its entry is~$1$ is denoted by~$𝟙$.  A
$P$-vector~$λ$ is called \emph{stochastic} (respectively,
\emph{substochastic}) if its entries are non-negative and sum up to~$1$
(respectively, to at most~$1$).  that is, $0 ⩽ λ_p ⩽ 1$ for each $p ∈ P$
and $λ𝟙 = 1$ (respectively, $λ𝟙 ⩽ 1$).  A matrix~$M$ is called
\emph{stochastic} (respectively, \emph{substochastic}) if each of its rows
is stochastic (respectively, \emph{substochastic}), that is $M𝟙 = 𝟙$
(respectively, $M𝟙 ⩽ 𝟙$). It is called \emph{strictly substochastic} if it
is substochastic but not stochastic.  This means that the entries of at
least one of its rows sum up to a value which is strictly smaller than~$1$.

In the paper, we mainly consider rational measures also known as hidden
Markov measures that we now introduce.  A measure~$μ$ is \emph{rational} if
it is realized by a weighted automaton \cite[Chap~.4]{Sakarovitch09a}.
Equivalently there is an integer~$m$, a row $(1 × m)$-vector~$π$, a
morphism~$ν$ from $A^*$ into $m × m$-matrices over real numbers and a
column $m × 1$-vector~$ρ$ such that the following equality holds for each
word $a_1 ⋯ a_k$ \cite{BerstelReutenauer10}.
\begin{displaymath}
  μ(a_1 ⋯ a_k) = π ν(a_1 ⋯ a_k) ρ  = π ν(a_1)  ⋯ ν(a_k) ρ
\end{displaymath}
The triple $\tuple{π,ν,ρ}$ is called a \emph{representation} of the
rational measure~$μ$.  By the the main result in \cite{HanselPerrin90}, it
can always be assumed that both the vector~$π$ and the matrix $∑_{a∈
  A}{ν(a)}$ are stochastic and that the vector~$ρ$ is the vector~$𝟙$.  The
triple $\tuple{π,ν,𝟙}$ is then called a \emph{stochastic representation}
of~$μ$.  The measure~$μ$ is invariant if $π ∑_{a∈ A}{ν(a)} = π$.

\subsection{Automata and ambiguity}

We refer the reader to \cite{PerrinPin04} for a complete introduction to
automata accepting (infinite) sequences of symbols.  A \emph{(Büchi)
  automaton} $𝒜$ is a tuple $\tuple{Q,A,Δ,I,F}$ where $Q$ is the finite
state set, $A$ the alphabet, $Δ ⊆ Q × A × Q$ the transition relation,
$I ⊆ Q$ the set of initial states and $F$ is the set of final states.  A
transition is a tuple $⟨ p,a,q ⟩$ in $Q × A × Q$ and it is written
$p \trans{a} q$.  A \emph{finite run} in~$𝒜$ is a finite sequence of
consecutive transitions,
\begin{displaymath}
   q_0 \trans{a_1} q_1 \trans{a_2} q_2 ⋯ q_{n-1} \trans{a_n} q_n
\end{displaymath}
Its \emph{label} is the word $a_1 a_2 ⋯ a_n$.  An \emph{infinite run}
in~$𝒜$ is a sequence of consecutive transitions,
\begin{displaymath}
  q_0 \trans{a_1} q_1 \trans{a_2} q_2 \trans{a_3} q_3 ⋯ 
\end{displaymath}
A run is \emph{initial} if its first state~$q_0$ is initial, that is,
belongs to~$I$.  A run is called \emph{final} if it visits infinitely often
a final state.  An infinite run is \emph{accepting} if it is both initial
and final.  A sequence is \emph{accepted} if it is the label of an
accepting run.  The set of accepted sequences is said to be \emph{accepted}
by the automaton.  As usual, an automaton is \emph{deterministic} if it has
only one initial state, that is $\#I = 1$ and if $p \trans{a} q$ and $p
\trans{a} q'$ are two of its transitions with the same starting state and
the same label, then $q = q'$.  The automaton pictured in
Figure~\ref{fig:not-closure} accepts the set $0^*1^ℕ$ of sequences having
some $0$s and then only~$1$s.  The leftmost automaton pictured in
Figure~\ref{fig:unambiguous} is deterministic while the middle one is not.
Both accept the set of sequences having infinitely many $1$s. An automaton
is \emph{trim} if each state occurs in an accepting run.

For each state~$q$, its \emph{future} (respectively \emph{bi-future}) is
the set $\future{q}$ (respectively, $\bifut{q}$) of sequences labelling a
final run (respectively, at least two final runs) starting from~$q$.  Let
$\clofut{q}$ (respectively, $\clobifut{q}$) be the set of sequences
labelling at least one (respectively, two) infinite run starting from~$q$
which might be final or not.  Note that if the automaton is trim,
$\clofut{q}$ is indeed the topological closure of~$\future{q}$ but that
$\clobifut{q}$ might not be the topological closure of~$\bifut{q}$ as shown
by the automaton pictured in Figure~\ref{fig:not-closure}.
\begin{figure}[htbp]
  \begin{center}
  \begin{tikzpicture}[>=stealth',initial text=,auto,inner sep=1pt]
    \tikzstyle{every state}=[semithick,minimum size=0.4]
    \node[state,initial left] (q0) at (0,0.4) {$0$};
    \node[state,accepting] (q1) at (1.5,0.8) {$1$};
    \node[state,accepting] (q2) at (1.5,0) {$2$};
    \path[->] (q0) edge[out=120,in=60,loop] node {$0$} (q0);
    \path[->] (q0) edge node[pos=0.6] {$1$} (q1);
    \path[->] (q0) edge node[pos=0.6,swap] {$1$} (q2);
    \path[->] (q1) edge[out=30,in=-30,loop] node {$1$} (q1);
    \path[->] (q2) edge[out=30,in=-30,loop] node {$1$} (q2);
  \end{tikzpicture}
  \end{center}
  \caption{$\clobifut{0} = 0^*1^ℕ$ and
           $\overline{\bifut{0}} = 0^*1^ℕ ∪ \{ 0^ℕ\}$}
  \label{fig:not-closure}
\end{figure}
The \emph{past} $\past{q}$ of a state~$q$ is the set of finite words
labelling a run ending in~$q$.  For an automaton~$𝒜$, we let $\fact{𝒜}$
denote the set of finite words labelling some run in~$𝒜$.  Therefore
$\fact{𝒜} = ⋃_{q ∈ Q}{\past{q}}$ where $Q$ is the state set of~$𝒜$.

An automaton is \emph{unambiguous} (for finite words) if for each states
$p,q ∈ Q$ and each word~$w$, there is at most one run $p \trans{w} q$
from~$p$ to~$q$ labelled by~$w$.  Each automaton which is either
deterministic or reverse-deterministic is unambiguous.

\begin{figure}[htbp]
  \begin{center}
  \begin{tikzpicture}[initial text=,auto,inner sep=1pt]
  \tikzstyle{every state}=[->,>=stealth',semithick,minimum size=0.4]
    \begin{scope}
      \node[state,initial above] (q11) at (0,0.75) {$1$};
      \node[state,accepting] (q12) at (1.5,0.75) {$2$};
      \path[->,>=stealth'] (q11) edge[out=210,in=150,loop] node {$0$} (q11);
      \path[->,>=stealth'] (q11) edge[bend left=15] node {$1$} (q12);
      \path[->,>=stealth'] (q12) edge[bend left=15] node {$0$} (q11);
      \path[->,>=stealth'] (q12) edge[out=30,in=-30,loop] node {$1$} (q12);
    \end{scope}
    \draw (2.7,-0.5) -- (2.7,2);
    \begin{scope}[xshift=110]
      \node[state,initial above] (q21) at (0,0.75) {$1$};
      \node[state,initial above,accepting] (q22) at (1.5,0.75) {$2$};
      \path[->,>=stealth'] (q21) edge[out=210,in=150,loop] node {$0$} (q21);
      \path[->,>=stealth'] (q21) edge[bend left=15] node {$0$} (q22);
      \path[->,>=stealth'] (q22) edge[bend left=15] node {$1$} (q21);
      \path[->,>=stealth'] (q22) edge[out=30,in=-30,loop] node {$1$} (q22);
    \end{scope}
    \draw (6.5,-0.5) -- (6.5,2);
    \begin{scope}[xshift=220]
      \node[state,initial above,accepting] (q31) at (0,1.5) {$1$};
      \node[state,accepting] (q32) at (0,0) {$2$};
      \node[state,accepting] (q33) at (1.7,1.5) {$3$};
      \node[state,accepting] (q34) at (1.7,0) {$4$};
      \path[->,>=stealth'] (q31) edge[out=210,in=150,loop] node {$0$} (q31);
      \path[->,>=stealth'] (q31) edge[bend left=15] node {$1$} (q32);
      \path[->,>=stealth'] (q32) edge[bend left=15] node {$0$} (q31);
      \path[->,>=stealth'] (q31) edge node {$0$} (q33);
      \path[->,>=stealth'] (q33) edge[bend left=15] node {$1$} (q34);
      \path[->,>=stealth'] (q34) edge[bend left=15] node {$0{,}1$} (q33);
      \path[->,>=stealth'] (q34) edge[bend left=8] node[pos=0.35] {$1$} (q31);
    \end{scope}
  \end{tikzpicture}
  \end{center}
  \caption{Three unambiguous automata}
  \label{fig:unambiguous}
\end{figure}
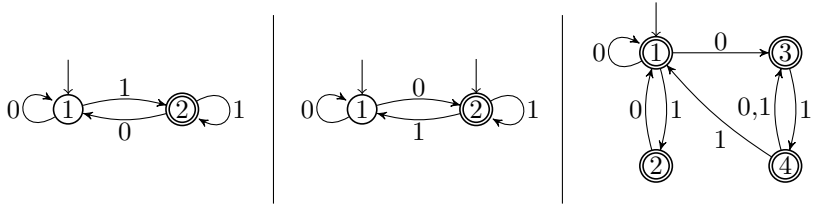

The three automata pictured in Figure~\ref{fig:unambiguous} are
unambiguous.  The leftmost one is deterministic and $\future{1} =
\future{2} = (0^*1)^ℕ$, $\clofut{1} = \clofut{2} = \{0,1\}^ℕ$ and
$\bifut{1} = \bifut{2} = ∅$.  The middle one is reverse deterministic
(that is, becomes deterministic if transitions are reversed), $\future{1} =
0(0^*1)^ℕ$, $\future{2} = 1(0^*1)^ℕ$, $\clofut{1} = 0\{0,1\}^ℕ$,
$\clofut{2} = 1\{0,1\}^ℕ$ and $\bifut{1} = \bifut{2} = ∅$.  The rightmost
one is neither deterministic nor reverse deterministic but it is
unambiguous.  Note however that $\bifut{1}$ is not empty: $\bifut{1} ⊃
0^*(01)^ℕ$.  An ambiguous automaton is pictured in
Figure~\ref{fig:example-ambiguous} below.

With each stochastic representation $\tuple{π,ν,𝟙}$ of a rational measure
is associated an automaton whose state set is $P = \{1,…,m\}$ where $m$ is
the common dimension of all matrices $ν(a)$.  For each states $p,q ∈ P$,
there is a transition $p \trans{a} q$ whenever $ν(a)_{p,q} > 0$.  The
initial states are those states~$q$ in~$P$ such that $π_q > 0$.  Due to
this automaton, $\clofut{p}$ is well-defined for a state~$p ∈ P$.  The
representation is called \emph{irreducible} if this automaton is strongly
connected.  A rational measure is called \emph{irreducible} if it has at
least one irreducible representation.

A strongly connected component~$C$ of graph (respectively automaton) is
called \emph{terminal} if it cannot be left, that is, if $p \trans{} q$
is a edge with $p ∈ C$, then $q ∈ C$.

\section{Main result} \label{sec:result}

Ambiguity of automata has been defined using finite words: an automaton is
ambiguous if some finite word~$w$ is the label of two different runs from a
state~$p$ to a state~$q$.  If the automaton is trim, this implies that some
sequence of the form $wy$ is the label of two different runs from~$p$.  The
converse of this implication does not hold in general.  The third automaton
pictured in Figure~\ref{fig:unambiguous} is unambiguous although the sequence
$(01)^ℕ = 0101⋯$ is the label of the following two accepting runs starting
from state~$1$.
\begin{align*}
  1 & \trans{0} 1 \trans{1} 2 \trans{0} 1 \trans{1} 2 \trans{0} 1 \trans{1}
      ⋯ \\
  1 & \trans{0} 3 \trans{1} 4 \trans{0} 3 \trans{1} 4 \trans{0} 3 \trans{1}
      ⋯ 
\end{align*}
However, the set $\bifut{1}$ is contained in $(0+1)^*(01)^ℕ$ and it is thus
countable and of measure~$0$ for the uniform measure.  Note that if each
transition $p \trans{1} q$ is replaced by the two transitions
$p \trans{1} q$ and $p \trans{2} q$, the set $\bifut{1}$ is not anymore
countable but it is still of measure~$0$ as a subset of $\{0, 1, 2\}^ℕ$.

The following theorem provides a characterization of ambiguity using
measure theory.  More precisely, it states that a strongly connected
automaton is unambiguous whenever the measure of sequences labelling two
runs is negligible, that is, of measure zero.
\begin{theorem} \label{thm:measure}
  Let $𝒜$ be a strongly connected Büchi automaton and let $μ$ be an
  irreducible rational measure such that $\supp{μ} = \fact{𝒜}$.  The
  following conditions are equivalent.
  \begin{itemize} \itemsep0cm
  \item[i)] The automaton~$𝒜$ is unambiguous.
  \item[ii)] For each state $q$ of~$𝒜$, $μ(\clobifut{q}) = 0$.
  \item[iii)] There is a state~$q$ of~$𝒜$ such that $μ(\bifut{q}) = 0$.
  \end{itemize}
\end{theorem}

The irreducibility of the measure ensures that it does not put too much
weight on too small sets (See example after
Proposition~\ref{pro:same-measure}).  The measure used to quantify this
ambiguity must also be compatible with the automaton.  More precisely, its
support must be equal to the set of finite words labelling at least one run
in the automaton.  If this condition is not fulfilled, the result may not
hold as it is shown by the following two examples.

Consider again the third automaton pictured in Figure~\ref{fig:unambiguous}.
Let $μ$ be the probability measure putting weight~$1/2$ on each of the
sequences $(01)^ℕ$ and~$(10)^ℕ$ and zero everywhere else.  More formally,
it is defined $μ((01)^ℕ) = μ((10)^ℕ) = 1/2$ and $μ(\{0, 1\}^ℕ ∖ \{(01)^ℕ,
(10)^ℕ\}) = 0$.  The measure $μ(\bifut{1}) = 1/2$ is non-zero although the
automaton is unambiguous because the support $(01)^* + (10)^*$ of this
measure~$μ$ is strictly contained in the set of words labelling a run in
this automaton.  This latter set is actually the set $\{0, 1\}^*$ of all
finite words over $\{0, 1\}$.

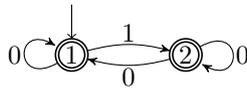
\begin{figure}[htbp]
  \begin{center}
  \begin{tikzpicture}[initial text=,auto,inner sep=1pt]
  \tikzstyle{every state}=[->,>=stealth',semithick,minimum size=0.4]
    \node[state,initial above,accepting] (q1) at (0,0.75) {$1$};
    \node[state,accepting] (q2) at (1.5,0.75) {$2$};
    \path[->,>=stealth'] (q1) edge[out=210,in=150,loop] node {$0$} (q1);
    \path[->,>=stealth'] (q1) edge[bend left=15] node {$1$} (q2);
    \path[->,>=stealth'] (q2) edge[bend left=15] node {$0$} (q1);
    \path[->,>=stealth'] (q2) edge[out=30,in=-30,loop] node {$0$} (q2);
  \end{tikzpicture}
  \end{center}
  \caption{An ambiguous automaton accepting $(0 + 10)^ℕ$}
  \label{fig:example-ambiguous}
\end{figure}

Consider the automaton pictured in Figure~\ref{fig:example-ambiguous}.  It
accepts the set~$X$ of sequences with no consecutive $1$s.  It is ambiguous
because the word~$00$ is the label of the two runs $2 \trans{0} 1 \trans{0}
1$ and $2 \trans{0} 2 \trans{0} 1$.  The uniform measure $μ(X)$ is
zero.  Therefore, both numbers $μ(\bifut{1})$ and~$μ(\bifut{2})$ are zero
although the automaton is ambiguous.  This comes from the fact that the
support $\{0,1\}^*$ of the uniform measure strictly contains the set
$\fact{X}$.  This latter set is the set $(0 + 10)^*(1+ \emptyword)$ of
finite words with no consecutive~$1$s.

\section{Reduction to closed sets} \label{sec:reduction}

The purpose of this section is to show that the measure $μ(X)$ of a
rational set of sequences is closely related to the measure
$μ(\overline{X})$ of its closure as long as the measure~$μ$ is compatible
with~$X$.  The main result of this section is the following proposition
which is used in the proof of Theorem~\ref{thm:measure}.  The rest of the
section is devoted to the proof of the proposition.

\begin{proposition} \label{pro:same-measure}
  Let $𝒜$ be a strongly connected Büchi automaton, $q$ be a state of~$𝒜$
  and $w$ be a word in~$\past{q}$. Let $μ$ be an irreducible rational
  measure such that $\supp{μ} = \fact{𝒜}$.  Then $μ(w\future{q}) =
  μ(w\clofut{q})$.
\end{proposition}

The following example shows that the irreducibility assumption of the
measure is indeed necessary.
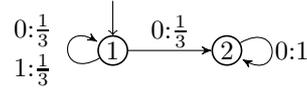
\begin{figure}[htbp]
  \begin{center}
  \begin{tikzpicture}[initial text=,auto,inner sep=1pt]
  \tikzstyle{every state}=[->,>=stealth',semithick,minimum size=0.4]
    \node[state,initial above] (q1) at (0,0.75) {$1$};
    \node[state] (q2) at (1.5,0.75) {$2$};
    \path[->,>=stealth'] (q1) edge[out=210,in=150,loop] node {$
      \begin{array}{c}
        0{:}\frac{1}{3} \\[1mm]
        1{:}\frac{1}{3}
      \end{array}$} (q1);
    \path[->,>=stealth'] (q1) edge node {$0{:}\frac{1}{3}$} (q2);
    \path[->,>=stealth'] (q2) edge[out=30,in=-30,loop] node {$0{:}1$} (q2);
  \end{tikzpicture}
  \end{center}
  \caption{A weighted automaton defining a non-irreducible measure}
  \label{fig:reducible-measure}
\end{figure}
Consider the measure given by the weighted automaton pictured in
Figure~\ref{fig:reducible-measure}.  This measure~$μ$ is equivalently
defined by $μ(w) = (1, 0)ν(w)\left(\begin{smallmatrix} 1 \\ 1
  \end{smallmatrix}\right)$ for each finite word~$w$ where the morphism~$ν$
from~$\{0,1\}^*$ into $2×2$-matrices is given by
\begin{displaymath}
  ν(0) = \left(\begin{smallmatrix} \frac{1}{3} & \frac{1}{3} \\ 0 & 1 \end{smallmatrix}\right)
  \quad\text{and}\quad
  ν(1) = \left(\begin{smallmatrix} \frac{1}{3} & 0 \\ 0 & 0 \end{smallmatrix}\right).
\end{displaymath}

The weight of a word~$w1$ ending with
a~$1$ is $3^{-|w|-1}$.  Therefore, the sum of these weights when $w$ ranges
over all words of length~$k$ over the alphabet~$\{0,1\}$ is given by
\begin{align*}
  ∑_{|w| = k}{μ(w1)} & = \frac{2^k}{3^{k+1}} \\
  ∑_{|w| ⩾ n}{μ(w1)} & = ∑_{k ⩾ n}{\frac{2^k}{3^{k+1}}} = \frac{2^n}{3^n} 
\end{align*}
Let $X = (0^*1)^ℕ$ be the set of sequences having infinitely many
occurrences of the symbol~$1$.  This set is equal to $\future{1}$ in the
leftmost automaton pictured in Figure~\ref{fig:unambiguous}.  Since $X$ is
contained in the union $⋃_{|w| ⩾ n}{w1\{0,1\}^ℕ}$ for each integer~$n ⩾ 0$,
the measure of~$X$ satisfies $μ(X) ⩽ 2^n/3^n$ for each integer~$n ⩾ 0$.
This proves that the measure of~$X$ is~$0$ although the measure of its
closure $\overline{X} = \{0,1\}^ℕ$ is~$1$.

If $𝒜$ is an automaton and $P ⊆ Q$ is a subset of its state set~$Q$, we let
$P ⋅ w$ denote the subset $P' ⊆ Q$ defined by $P' = \{ q : ∃ p ∈ P \;\; p
\trans{w} q \}$.  If $P$ is a singleton set~$\{q\}$, we write $q ⋅ w$ for
$\{q\} ⋅ w$.  By a slight abuse of notation, we also write $q ⋅ w = p$ for
$\{q\} ⋅ w = \{ p\}$. If $𝒜$ is deterministic, $q ⋅ w$ is either the empty
set or a singleton set.

\begin{lemma} \label{lem:alltrans}
  Let $𝒜$ be a strongly connected deterministic automaton.  There exists a
  finite word~$w$ such that:
  \begin{enumerate} \itemsep0cm
  \item[i)] there exists a state~$q$ such that $q⋅w$ is non-empty,
  \item[ii)] for each state~$q$ of~$𝒜$, if $q⋅w$ is non-empty, each
    transition of~$𝒜$ occurs in the run $q \trans{w} q⋅w$.
  \end{enumerate}
\end{lemma}
\begin{proof}
  Let $n$ be the number of states of~$𝒜$ and let $Q = \{ q_1, …, q_n \}$ be
  the state set of~$𝒜$.  We prove by induction on~$k$ that there exists a
  word~$w_k$ such that:
  \begin{enumerate} \itemsep0cm
  \item[i)] there exists a state~$q$ in $\{q_1, …, q_k\}$ such that $q⋅w$
    is non-empty,
  \item[ii)] for each state~$q$ in $\{q_1, …, q_k\}$, if $q⋅w$ is
    non-empty, each transition of~$𝒜$ occurs in the run $q \trans{w} q⋅w$.
  \end{enumerate}
  For $k = 1$, we choose a word~$w_1$ such that $q_1⋅w_1$ is non-empty and
  each transition of~$𝒜$ occurs in the run labelled by~$w_1$ from~$q_1$
  to~$q_1⋅w_1$.  Such a word does exist because $𝒜$ is strongly connected.
  Now suppose that $w_k$ has already been defined and let us define
  $w_{k+1}$ as follows.  If $q_{k+1} ⋅ w_k$ is empty, we set $w_{k+1} =
  w_k$.  If $q_{k+1} ⋅ w_k$ is non-empty, we set $w_{k+1} = w_ku_k$ where
  $u_k$ is a word such that $q_k ⋅ w_ku_k$ is non-empty and each transition
  of~$𝒜$ occurs in the run labelled by~$w_ku_k$ from~$q_k$ to~$q_k⋅w_ku_k$.
  In both cases, it is pure routine to check that the word~$w_{k+1}$
  satisfies the required property.
\end{proof}

Let $\tuple{π,ν,𝟙}$ be the representation of a rational measure~$μ$.  Its
support $\supp{ν}$ is defined by $\supp{ν} = \{ w : ν(w) ≠ 0 \}$.  It
obviously satisfies $\supp{μ} ⊆ \supp{ν}$.  This inclusion can be strict as
shown by the following example but it becomes an equality as soon as
$\supp{μ}$ is \emph{factorial}, that is closed under taking factor.

Let $μ$ be the measure defined by $μ(0w) = 0$ and $μ(1w) = 2^{-|w|}$ for
each word~$w$ in $\{0,1\}^*$.  It is rational because it is defined by
$μ(w) = (0,1)ν(w)\left(\begin{smallmatrix} 1 \\ 1 \end{smallmatrix}\right)$
where the morphism~$ν$ from~$\{0,1\}^*$ into $2×2$-matrices is given by
\begin{displaymath}
  ν(0) = \left(\begin{smallmatrix} & \frac{1}{2} \\ 0 & 0 \end{smallmatrix}\right)
  \quad\text{and}\quad
  ν(1) = \left(\begin{smallmatrix} 0 & 0 \\ \frac{1}{2} & \frac{1}{2} \end{smallmatrix}\right).
\end{displaymath}
The support of this measure~$μ$ is $\supp{μ} = 1\{0,1\}^*$ but the support
of the morphism~$ν$ is $\supp{ν} = \{ w ∈ \{0,1\}^* : ν(w) ≠ 0 \}$ is
$\{0,1\}^*$.  The support of a rational measure and the support of one of
its representations might not coincide in general but they do coincide as
soon as $\supp{μ}$ is factorial as stated by the following lemma.
\begin{lemma}
  Let $μ$ be an irreducible rational measure and let $\tuple{π,ν,𝟙}$ be an
  irreducible representation of~$μ$.  Then $\supp{μ}$ is factorial if
  and only if the equality $\supp{μ} = \supp{ν}$ holds.
\end{lemma}
\begin{proof}
  If the equality $\supp{μ} = \supp{ν}$ holds, then $\supp{μ}$ is
  obviously factorial because $\supp{ν}$ is factorial.
  
  We now prove the converse.  For each word~$w$, the equality $ν(w) = 0$
  implies that $μ(w) = 0$ and therefore the inclusion $\supp{μ} ⊆
  \supp{ν}$ always holds.  We now prove the reverse inclusion.  Let $w$
  be a word in~$\supp{ν}$.  There are two states $p$ and~$q$ such that
  $ν(w)_{p,q} > 0$.  Since $ν$ is irreducible, there is a state~$i$ and a
  word~$u$ such that $π_i > 0$ and $ν(u)_{i,p} > 0$.  These relations imply
  that $μ(uw) ⩾ π_iν(uw)_{i,q} ⩾ π_iν(u)_{i,p}ν(w)_{p,q} > 0$ and thus $w ∈
  \supp{μ}$ because $\supp{μ}$ is factorial.
\end{proof}
In the rest of the paper, the support of each measure is a factorial set
and both supports coincide.

The following lemma states the of sequences having finitely many occurrence
of some finite word has zero measure.  The irreducibility of the measure is
crucial.
\begin{lemma} \label{lem:zero-measure}
  Let $μ$ be an irreducible rational measure such that $\supp{μ}$ is
  factorial and let $w$ be a word in~$\supp{μ}$.  Then
  \begin{displaymath}
    μ(\{ x : |x|_w < ∞ \}) = 0.
  \end{displaymath}
  where $|x|_w$ is the number of occurrences of~$w$ in~$x$.
\end{lemma}
\begin{proof}
  Let $\tuple{π,ν,𝟙}$ be an irreducible representation of~$μ$ whose state
  set is~$P$.  For each $p ∈ P$, let $μ_p$ be the measure whose
  representation is $\tuple{δ_p,ν,𝟙}$ where $δ_p$ is defined by $δ_p(p') =
  1$ if $p = p'$ and $δ_p(p') = 0$ otherwise.  Note that the measure~$μ$ is
  equal to $μ = ∑_{p ∈ P}{π_pμ_p}$.
  
  We first prove that $μ(\{ x : |x|_w = 0 \}) = 0$. For that purpose, we
  introduce a deterministic Büchi automaton~$𝒜$ accepting the set $\{ x :
  |x|_w = 0 \}$.  Let $n$ be the length of the word~$w$.  Let $𝒜$ be the
  deterministic Büchi automaton $\tuple{Q,A,E,I,F}$ whose state set~$Q$ is
  the set $A^{⩽ n-1}$ of words of length less than~$n-1$.  The initial
  state is the empty word~$\emptyword$ and each state is final, that is, $F
  = Q$.  Its set~$E$ of transitions is defined by
  \begin{align*}
    E & {} =  \{ u \trans{a} ua : u ∈ A^{⩽n-2} ∧ a ∈ A \} \\
      & {} ∪ \{ bu \trans{a} ua : u ∈ A^{n-1} ∧ a,b ∈ A ∧ bua ≠ w \} 
  \end{align*}
  For each word~$u$, there is a run from the initial state~$\emptyword$
  labeled by~$u$ if $u$ contains no occurrence of~$w$.  The state reached
  by this run is either $u$ if $|u| ⩽ n-1$ or the suffix of length~$n-1$
  of~$u$.
  
  For each state $q$ in~$Q$, let $X_q$ be the set of sequences labelling a
  (accepting) run from~$q$ and $α_{p,q}$ be the measure $μ_p(X_q)$ for each
  $p ∈ P$ and $q ∈ Q$.  Note that $X_{\emptyword}$ is the set $\{ x : |x|_w
  = 0 \}$. For each $p ∈ P$ and $q ∈ Q$, the number $α_{p,q}$ satisfies the
  equality.
  \begin{displaymath}
    α_{p,q} = ∑_{a ∈ A, p' ∈ P} ν(a)_{p,p'}α_{p',q⋅a}
  \end{displaymath}
  All these equalities can be written using matrices.  The state set~$Q$
  of~$𝒜$ is split into $Q = Q_1 ⊎ Q_2$ where $Q_1 = A^{<n-1}$ and $Q_2 =
  A^{n-1}$.  Let $α$ be the vector $(α_{p,q})_{p ∈ P, q ∈ Q}$.  Let us
  write $α = (α_1,α_2)$ where $α_1$ and $α_2$ are the two vectors
  $(α_{p,q})_{p ∈ P, q ∈ Q_1}$ and $(α_{p,q})_{p ∈ P, q ∈ Q_2}$.
  \begin{displaymath}
    (α_1, α_2) =
    \begin{pmatrix}
      M_{1,1} & M_{1,2} \\
         0    & M_{2,2}
    \end{pmatrix}
    \begin{pmatrix}
      α_1 \\
      α_2
    \end{pmatrix}
  \end{displaymath}
  Since transitions such that $bua = w$ are not in~$𝒜$, the matrix
  $M_{2,2}$ is strictly sub-stochastic.  This implies that $α_{p,q} = 0$ for
  $q ∈ A^n$.  Since there are transitions from~$Q_1$ to~$Q_2$, the
  matrix~$M_{1,1}$ is also strictly sub-stochastic and the matrix
  $I-M_{1,1}$ is invertible.  It follows that $α_1$ satisfies $α_1 =
  (I-M_{1,1})^{-1}M_{2,2}α_2$ and that $α_1 = 0$.  This concludes the proof
  that that $μ(\{ x : |x|_w = 0 \}) = 0$.
  
  Let $k ⩾ 1$ be a positive integer.  The proof that $μ(\{ x : |x|_w ⩽ k
  \}) = 0$ is similar to the proof for $k = 0$.  We introduce a
  deterministic Büchi automaton~$𝒜_k$ accepting $\{ x : |x|_w ⩽ k \}$.  Its
  state set~$Q$ is $A^{<n-1} ∪ A^{n-1} × \{0, …, k\}$ and the initial state is
  the empty word~$\emptyword$.  Its set~$E_k$ of transitions is defined by
  \begin{align*}
    E & {} = \{ u \trans{a} ua : u ∈ A^{⩽ n-3} ∧ a ∈ A \} \\
      & {} ∪ \{ u \trans{a} (ua,0) : u ∈ A^{n-2} ∧ a ∈ A \} \\
      & {} ∪ \{ (bu,i) \trans{a} (ua,i) : u ∈ A^{n-2} ∧ a,b ∈ A ∧ bua ≠ w ∧
                0 ⩽ i ⩽ k\} \\
      & {} ∪ \{ (bu,i) \trans{a} (ua,i+1) : u ∈ A^{n-2} ∧ a,b ∈ A ∧ bua = w ∧
               0 ⩽ i ⩽ k-1\}
  \end{align*}
  For each word~$u$, there is a run from the initial state~$\emptyword$
  labeled by~$u$ if $u$ contains at most $k$ occurrences of~$w$. The state
  reached by this run is either $u$ if $|u| ⩽ n-1$ or the pair $(v,i)$
  where $v$ is the suffix of length~$n-1$ of~$u$ and $i = |u|_w$ is the
  number of occurrences of~$w$ in~$u$.  The state set~$Q$ of~$𝒜$ can be
  split into $Q = Q_1 ⊎ Q_2$ where $Q_1 = A^{<n-1} ∪ A^{n-1} × \{0, …,
  k-1\}$ and $Q_2 =A^{n-1} × \{k\}$.  The same reasonning can be used again
  to prove that $μ(\{ x : |x|_w ⩽ k \}) = 0$.
\end{proof}

We let $\trans{*}$ denote the accessibility relation in an automaton.  We
write $p \trans{*} q$ if there is a run from~$p$ to~$q$.  If $P$ and~$P'$
are two subsets of states of an automaton, we write $P \trans{*} P'$
whenever there is a run from a state in~$P$ to a state in~$P'$.  This
relation is not transitive in general but it is when each considered subset
is contained in a strongly connected component.  The following lemma gives
a property of Muller automata accepting the same set as a strongly
connected Büchi automaton.

\begin{lemma} \label{lem:maximal-component}
  Let $X ⊆ A^ℕ$ be a non-empty set of sequences accepted by a strongly
  connected Büchi automaton and let $w ∈ A^*$ be a finite word.  Let $𝒜$ be
  a Muller automaton accepting the set~$wX$ and let $𝒯$ be its table.  Let
  $F$ be an element of~$𝒯$ such that $F' ∈ 𝒯$ and $F \trans{*} F'$ imply
  $F' \trans{*} F$.  Then the strongly connected component containing $F$
  also belongs to the table~$𝒯$.
\end{lemma}
\begin{proof}
  Let $ℬ$ be a strongly connected Büchi automaton accepting~$X$.

  Let $w_1$ be the label of a run from the initial state of~$𝒜$ to a
  state~$q$ in~$F$.  Let $w_2$ be the label of a run from~$q$ to~$q$ such
  that the set of states visited by this run is exactly~$F$.  The sequence
  $w_1w_2^ℕ$ is thus accepted by~$𝒜$.  Since this set is contained
  in~$wA^ℕ$, $w$ must be a prefix of~$w_1w_2^n$ for $n$ large enough.  By
  replacing $w_1$ by $w_1w_2^n$, it can be assumed, without loss of
  generality, that $w$ is a prefix of~$w_1$.  Let $w'$ be the word such
  that $ww' = w_1$.
  
  Let $C$ be the strongly connected component containing~$F$.  We claim
  that for each word~$u$, $u$ is the label of some run in~$C$ if and only
  if $u$ is the label of some run in~$ℬ$.  Suppose first that $u$ is the
  label of some run in~$C$.  There are then two words $v_1$ and~$v_2$ such
  that $v_1uv_2$ is the label of a run from~$q$ to~$q$ in~$𝒜$.  Therefore
  the word $w_1v_1uv_2w_2^ℕ$ is accepted by~$𝒜$.  This proves that $u$ is
  the label of some run in~$ℬ$.  Now suppose that $u$ is the label of some
  run in~$ℬ$.  Since $w'w_2^ℕ$ is accepted by~$ℬ$, there is a run from the
  initial state of~$ℬ$ to some state~$q'$ labeled by~$w'$.  There are two
  words $v_1$ and~$v_2$ such that $v_1uv_2$ is the label of a run from~$q'$
  to~$q'$ visiting a final state of~$ℬ$.  Therefore the word
  $w'(v_1uv_2)^ℕ$ is accepted by~$ℬ$ and $w_1(v_1uv_2)^ℕ$ must be accepted
  by~$𝒜$.  This shows that $u$ is the label of some run in~$C$.
  
  By Lemma~\ref{lem:alltrans} applied to~$C$ considered as an automaton,
  there exists a word~$w_3$ such that, there exists a state~$p$ in~$C$ such
  that $p⋅w_3$ is well-defined and for each state~$p$ in~$C$, if $p⋅w_3$ is
  well-defined, then each transition in~$C$ occurs in the run from $p$ to
  $p⋅w_3$.  Applying the previous claim to the word~$w_3$ shows that $w_3$
  is the label of some run in~$ℬ$.  Therefore, there exist words $v_1$
  and~$v_2$ such that $w'(v_1w_3v_2)^ℕ$ is accepted by~$ℬ$ and
  $w_1(v_1w_3v_2)^ℕ$ is thus accepted by~$𝒜$.  Each occurrence of~$w_3$ in
  this sequence is the label of a run visiting all states of~$C$.  It
  follows that each state~$p$ of~$C$ occurs infinitely often in the run
  labelled by $w_1(v_1w_3v_2)^ℕ$ and thus $C$ belongs to the table~$𝒯$
  of~$𝒜$.
\end{proof}

A subset~$P$ of states of a Muller automaton~$𝒜$ is called \emph{essential}
if there is an infinite run in~$𝒜$ such that $P$ is the set of states that
occur infinitely often along this run.
\begin{proof}[Proof of Proposition~\ref{pro:same-measure}]
  The proof of the proposition is reduced to proving that $μ(w\clofut{q} ∖
  w\future{q}) = 0$.  We consider a Muller automaton accepting
  $w\future{q}$ with a table~$𝒯$.  Note that a Muller automaton accepting
  $w\clofut{q}$ is obtained by replacing the table~$𝒯$ by the table
  $\overline{𝒯}$ which contains each essential set of states which can
  access an essential set of states in~$𝒯$.  The difference set
  $w\clofut{q} ∖ w\future{q}$ is thus accepted by the same Muller automaton
  with the table $𝒯 ∖ \overline{𝒯}$.  By Lemma~\ref{lem:maximal-component},
  each maximal essential state is in the table~$𝒯$.  By combining Lemmas
  \ref{lem:alltrans} and~\ref{lem:zero-measure}, it is clear that
  $μ(w\clofut{q} ∖ w\future{q}) = 0$.
\end{proof}

\section{Proof for closed sets} \label{sec:proof}

Thanks to Proposition~\ref{pro:same-measure}, it is sufficient to study
closed sets.  As the initial states of the automaton are not relevant for
the statement of Theorem~\ref{thm:measure}, we consider automata where all
states are initial and final, that is, $I = F = Q$.  It turns out that
these automata accept shift spaces that we now introduce.

The \emph{shift map} is the function~$σ$ which maps each sequence $(x_i)_{i
  ⩾ 1}$ to the sequence $(x_{i+1})_{i ⩾ 1}$ obtained by removing its first
element. A \emph{shift space} is a subset $X$ of~$A^ℕ$ which is closed for
the usual product topology and such that $σ(X) = X$.  A classical example
of a shift space is the \emph{golden mean shift}: it is the set
$\{0,10\}^ℕ$ of sequences with no consecutive $1$s. We refer the reader to
\cite{LindMarcus95} for a complete introduction to shift spaces.

If a shift space is accepted by some trim Büchi automaton, it is also
accepted by the same automaton in which each state is made initial and
final.  A shift space is called \emph{sofic} if it is accepted by some
automaton.  A sofic shift is called \emph{irreducible} if it is accepted by
a strongly connected Büchi automaton.  It is well-known \cite{LindMarcus95}
that each shift space is characterized by the set of factors of its
sequences.  Let us recall that $\fact{X}$ denotes the set of factors of a
shift space~$X$.

There is a unique, up to isomorphism, deterministic automaton accepting an
irreducible sofic shift with the minimal number of states
\cite[Thm~3.3.18]{LindMarcus95}.  This \emph{minimal automaton} is also
referred to as either its Shannon cover or its Fischer cover.  It can be
obtained from any automaton accepting the shift space via determinizing and
state-minimizing algorithms, e.g., \cite[pp. 92]{LindMarcus95}, \cite[pp.
68]{HopcroftUllman79}.  The minimal automaton of the golden mean shift is
the leftmost automaton pictured in Figure~\ref{fig:goldenmean}.  A
\emph{synchronizing word} of a strongly connected automaton is a word~$w$
such that there is a unique state~$q$ such that $w ∈ \past{q}$.  The
word~$1$ is a synchronizing word of both automata pictured in
Figure~\ref{fig:goldenmean}.  The mini\-mal automaton of a sofic shift has
always at least one synchronizing word~\cite[Prop.~3.3.16]{LindMarcus95}.

\begin{figure}[htbp]
  \begin{center}
  \begin{tikzpicture}[initial text=,auto,inner sep=1pt]
  \tikzstyle{every state}=[->,>=stealth',semithick,minimum size=0.4]
    \begin{scope}
      \node[state,initial above,accepting] (q11) at (0,0.75) {$1$};
      \node[state,initial above,accepting] (q12) at (1.5,0.75) {$2$};
      \path[->,>=stealth'] (q11) edge[out=210,in=150,loop] node {$0$} (q11);
      \path[->,>=stealth'] (q11) edge[bend left=15] node {$1$} (q12);
      \path[->,>=stealth'] (q12) edge[bend left=15] node {$0$} (q11);
    \end{scope}
    \draw (3.1,0) -- (3.1,1.6);
    \begin{scope}[xshift=150]
      \node[state,initial above,accepting] (q21) at (0,0.75) {$1$};
      \node[state,initial above,accepting] (q22) at (1.5,0.75) {$2$};
      \path[->,>=stealth'] (q21) edge[out=210,in=150,loop] node {$0$} (q21);
      \path[->,>=stealth'] (q21) edge[bend left=15] node {$0$} (q22);
      \path[->,>=stealth'] (q22) edge[bend left=15] node {$1$} (q21);
    \end{scope}
  \end{tikzpicture}
  \end{center}
  \caption{Two automata accepting the golden mean shift}
  \label{fig:goldenmean}
\end{figure}
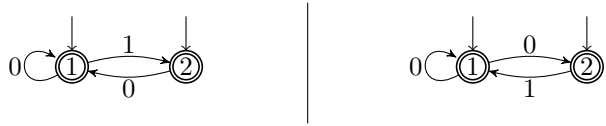

The next lemma states that some sets induced by the minimal automaton of a
sofic shift have a positive measure.  
\begin{lemma} \label{lem:minimal-positive}
  Let $r$ be a state of the minimal automaton of an irreducible sofic shift
  space~$X$ and let $w$ be a synchronizing word such that $w ∈ \past{r}$.
  Let $μ$ be a rational measure such that $\supp{μ} = \fact{X}$.  Then
  $μ(w\clofut{r}) = μ(wA^ℕ) > 0$.
\end{lemma}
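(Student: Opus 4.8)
The plan is to prove the two assertions separately: first the strict positivity $\mu(wA^\mathbb{N}) > 0$, and then the equality $\mu(w\future{r}) = \mu(wA^\mathbb{N})$. For the positivity, recall that $\mu(wA^\mathbb{N}) = \mu(w)$ by the Carathéodory extension, so it suffices to check $\mu(w) > 0$. Since $w \in \past{r}$, the word $w$ labels a run in the minimal automaton, and because that automaton accepts~$X$, the finite words labelling its runs are exactly the factors of~$X$; hence $w \in \fact{X}$. As $\support{\mu} = \fact{X}$, this gives $\mu(w) > 0$.

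For the equality, since $w\future{r} \subseteq wA^\mathbb{N}$ I only need the reverse inequality, which I obtain by showing that the leftover set $wA^\mathbb{N} \setminus w\future{r} = w(A^\mathbb{N}\setminus\future{r})$ is $\mu$-null. This is where determinism and the synchronizing hypothesis enter. Because the minimal automaton is deterministic, a sequence $y$ fails to belong to $\future{r}$ exactly when the unique run from~$r$ reading~$y$ gets stuck after finitely many steps: there is a least $k \geqslant 0$ and a prefix $u = y_1\cdots y_k$ such that $r\cdot u$ is a state~$s$ but no transition out of~$s$ carries the label $a := y_{k+1}$. I would then record that for every such pair $(u,a)$ the word $wua$ labels no run at all: any run reading the synchronizing prefix~$w$ must end in~$r$, after which determinism forces the continuation $r\cdot(ua)$, which is empty. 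Consequently $wua \notin \fact{X}$.

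With this structural description in hand, the measure estimate is routine. The set $w(A^\mathbb{N}\setminus\future{r})$ is covered by the countable family of cylinders $wua\,A^\mathbb{N}$ ranging over the pairs $(u,a)$ described above, since every $y \notin \future{r}$ produces such a pair with $wy \in wua\,A^\mathbb{N}$. Each cylinder satisfies $\mu(wua\,A^\mathbb{N}) = \mu(wua) = 0$, because $wua \notin \fact{X} = \support{\mu}$, so countable subadditivity yields $\mu\big(w(A^\mathbb{N}\setminus\future{r})\big) = 0$. Combining the two parts gives $\mu(w\future{r}) = \mu(wA^\mathbb{N}) = \mu(w) > 0$, as required.

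The main obstacle, and the only place the hypotheses are genuinely used, is the claim that an escape word $wua$ is not a factor of~$X$. This is exactly where the synchronizing property of~$w$ is essential: without it, a run labelled $wua$ could reach~$r$ along a different state sequence or bypass~$r$ altogether, so that $wua$ might be a legitimate factor carrying positive measure. Synchronization pins every $w$-labelled run to the single state~$r$, so that the deterministic continuation from~$r$ alone decides membership in $\fact{X}$, and this is what makes the covering cylinders negligible.
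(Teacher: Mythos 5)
Your proof is correct and rests on the same two pillars as the paper's own argument: synchronization of $w$ pins the state reached after reading $w$ to~$r$, and cylinders over words outside $\fact{X}=\support{\mu}$ are $\mu$-null. The only organizational difference is that the paper proves the set identity $w\future{r}=X\cap wA^{\mathbb{N}}$ and combines it with $\mu(X)=1$ (itself obtained by covering $A^{\mathbb{N}}\setminus X$ with null cylinders over non-factors), whereas you cover $wA^{\mathbb{N}}\setminus w\future{r}$ directly by the null cylinders $wua\,A^{\mathbb{N}}$ attached to the escape prefixes, invoking determinism of the minimal automaton for the ``least escape time'' decomposition --- a step the paper's version does not need but which is harmless here.
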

\begin{proof}
  Since $X$ is a closed set, its complement $A^ℕ ∖ X$ is equal to the union
  $⋃_{w ∉ \fact{X}} wA^ℕ$.  Since $\supp{μ} = \fact{X}$, the equality
  $μ(X) = 1$ holds.  It follows that $μ(X ∩ wA^ℕ) = μ(wA^ℕ)$ for each word
  $w$.  We claim that if $w$ is synchronizing and $w ∈ \past{r}$, then
  $w\clofut{r} = X ∩ wA^ℕ $.  The inclusion $w\clofut{r} ⊆ X ∩ wA^ℕ $
  follows directly from $w ∈ \past{r}$.  The reverse inclusion follows from
  the fact that $w$ is synchronizing.  Combining the two equalities, we get
  that $μ(w\clofut{r}) = μ(wA^ℕ)$.  This latter number is positive since
  $w ∈ \fact{X}$ and $\supp{μ} = \fact{X}$.
\end{proof}

The following lemma establishes a link between any automaton accepting a
sofic system and its minimal automaton.  It allows us to tranfert the
result of the previous lemma to non-minimal automata.
\begin{lemma} \label{lem:witness}
  Let $𝒜$ be a strongly connected automaton accepting an irreducible sofic
  shift space~$X$.  Let $w$ be a word and $q$ be a state of~$𝒜$ such that
  $w ∈ \past{q}$.  There exists a state~$r$ of the minimal automaton of~$X$
  and a synchronizing word~$v$ of this minimal automaton such that $wv ∈
  \past{r}$ and $\clofut{q} ∩ vA^ℕ = v\clofut{r}$.
\end{lemma}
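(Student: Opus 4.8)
The plan is to pass to the determinization of $\mathcal{A}$ and to phrase everything in terms of the \emph{follower set at infinity} of a word, $W(v):=\{z\in A^{\mathbb N}:vz\in X\}$. Writing $q\cdot v$ for the set of states reachable from $q$ by reading $v$ and extending $\future{\cdot}$ to sets of states by $\future{P}=\bigcup_{p\in P}\future{p}$, the basic determinization identity reads $\future{q}\cap vA^{\mathbb N}=v\,\future{q\cdot v}$, since a sequence of $\future{q}$ beginning with $v$ is exactly $v$ followed by a sequence readable from one of the states of $q\cdot v$. On the side of the minimal automaton, which I denote $\mathcal{M}$, if $v$ is a synchronizing word of $\mathcal{M}$ with $v\in\past{r}$ then $r$ is the unique such state and, exactly as in the proof of Lemma~\ref{lem:minimal-positive} (where $v\future{r}=X\cap vA^{\mathbb N}$ is established), one has $\future{r}=W(v)$. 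Moreover any run of $\mathcal{M}$ labelled $uv$ ends at $r$ because its $v$-suffix does, so the requirement $uv\in\past{r}$ merely amounts to $uv\in\fact{X}$ and forces this particular $r$. Thus the whole statement reduces to producing a synchronizing word $v$ of $\mathcal{M}$ with $uv\in\fact{X}$ and $q\cdot v\neq\emptyset$ for which $\future{q\cdot v}=W(v)$.

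One inclusion is free. Since all states of $\mathcal{A}$ are initial and final, a run from $q\cdot v$ labelled by some $z$ can be prepended with a run realizing $u\in\past{q}$ and the chosen run from $q$ labelled $v$, yielding an accepting run labelled $uvz$; hence $uvz\in X$ and, by shift-invariance of $X$, $vz\in X$, i.e. $z\in W(v)$. This gives $\future{q\cdot v}\subseteq W(v)$ for every $v$ with $q\cdot v\neq\emptyset$, and therefore $\future{q}\cap vA^{\mathbb N}\subseteq v\future{r}$ already. All the content of the lemma is the reverse inclusion $W(v)\subseteq\future{q\cdot v}$, equivalently the assertion that $\future{q}$ contains the entire cylinder $X\cap vA^{\mathbb N}=vW(v)$ for a suitable synchronizing $v$.

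To obtain such a $v$ I would start from a synchronizing word $w_0$ of $\mathcal{M}$, which is an intrinsically synchronizing word of $X$ (if $\alpha w_0,w_0\beta\in\fact{X}$ then $\alpha w_0\beta\in\fact{X}$) and satisfies $\future{r_0}=W(w_0)$ at its target $r_0$. Reading $w_0$ from every state of $\mathcal{A}$ one gets $\future{Q\cdot w_0}=W(w_0)$: indeed every $X$-sequence $w_0z$ is the label of an accepting run, which after its $w_0$-part sits in $Q\cdot w_0$, so the states of $Q\cdot w_0$ jointly realise all continuations in $W(w_0)$. The idea is then to use the strong connectivity of $\mathcal{A}$ to steer the runs issued from the single state $q$ so that, after a word $v$ ending in a copy of $w_0$ (hence still synchronizing for $\mathcal{M}$ with target $r_0$ and with $uv\in\fact{X}$), the set $q\cdot v$ already realises every continuation of $W(v)=W(w_0)$. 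Once $\future{q\cdot v}=W(v)$ is reached, taking $r=r_0$ gives $\future{q}\cap vA^{\mathbb N}=v\future{q\cdot v}=vW(v)=v\future{r}$ together with $uv\in\past{r}$, as required.

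The one genuinely delicate point is this reverse inclusion, and its difficulty is entirely a matter of nondeterminism. For $z\in W(v)$ the word $uvz$ lies in $X$ by intrinsic synchronization, but the accepting run witnessing it need not pass through $q$ after reading $u$, so its endpoint after $v$ need not belong to $q\cdot v$; a single run from $q$ lands in only one state of $Q\cdot w_0$, whose future may be deficient. The work is therefore to show that the missing endpoints can all be captured at once, i.e. that some single synchronizing $v$ makes $q\cdot v$ future-complete; I expect to achieve this by alternately using strong connectivity to drive a run from $q$ onto the states of $Q\cdot w_0$ that carry the missing continuations and then re-reading $w_0$ to resynchronize, arguing that this process stabilises on a subset with full future $W(w_0)$. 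I regard establishing this stabilisation as the crux of the proof. In the special case where $\mathcal{A}$ is deterministic it is immediate: $q\cdot v$ is then a single state $p$ with $v\in\past{p}$, and intrinsic synchronization forces $\future{p}=W(v)$ directly, so no steering is needed.
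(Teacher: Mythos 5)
You have correctly reduced the lemma to its real content: since $\future{q}\cap vA^{\mathbb{N}} = v\,\future{q\cdot v}$ and the inclusion $\future{q\cdot v}\subseteq W(v)$ (your follower set) is automatic, everything hinges on producing a synchronizing word $v$ with $uv\in\fact{X}$ for which the subset $q\cdot v$ is ``future-complete'', i.e.\ $W(v)\subseteq\future{q\cdot v}$. This is exactly the skeleton of the paper's argument. But you then leave this crux unproved: the proposed scheme of alternately steering runs from $q$ onto the missing states of $Q\cdot w_0$ and re-reading $w_0$, ``arguing that this process stabilises'', is not carried out, and it is not clear it can be as stated. The map $P\mapsto P\cdot v'$ on subsets of $Q$ is not monotone --- extending the word can lose continuations as well as gain them --- so there is no a priori reason the iteration converges to a subset realising all of $W(w_0)$; you would at least need an invariant guaranteeing that coverage, once acquired, is never lost, and none is given.

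The paper closes this gap with a mechanism your sketch does not contain. Form the subset automaton generated from $q$ alone (states $q\cdot w$, transitions $q\cdot w\trans{a}q\cdot wa$) and pass to a \emph{recurrent} strongly connected component $C$ of it. Using the irreducibility of $X$, one shows that $C$, as a deterministic strongly connected automaton, accepts all of $X$; hence its quotient by the future-equivalence $P\sim P'$ iff $\future{P}=\future{P'}$ is the minimal automaton of $X$. Taking $v=v_1v_2$, where $v_1$ drives $q$ into $C$ and $v_2$ is a synchronizing word of the minimal automaton, then forces $\future{q\cdot v}=\future{r}$ for the unique state $r$ reached by $v_2$, which is precisely the future-completeness you need (and $uv\in\past{r}$ follows since $uv\in\fact{X}$ and $v$ is synchronizing). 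Without this recurrent-component step, or some substitute for it, your proposal is a correct reduction but not a proof.
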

\begin{proof}
  Let $Q$ be the state set of~$𝒜$.  Let us consider the deterministic
  automaton $\hat{𝒜}$ whose state set~$\hat{Q}$ is the set of non-empty
  subsets of~$Q$ of the form $q ⋅ u$ for some word~$u ∈ A^*$. The
  transitions of~$\hat{𝒜}$ are the transitions of the form $q ⋅ u \trans{a}
  q ⋅ ua$ for each word~$u \in A^*$ and each symbol~$a ∈ A$.
  
  Let $C$ be a terminal strongly connected component of~$\hat{𝒜}$.  We
  claim that $C$, seen as a whole automaton, accepts~$X$.  Let $u_0$ be a
  word such that $q ⋅ u_0$ is a state of~$\hat{𝒜}$ in~$C$.  Let $u_2 ∈
  \fact{X}$ be a factor of~$X$.  Since $𝒜$ is strongly connected, there is
  a word~$u_1$ such that there is a run in~$𝒜$ starting from~$q$ and
  labelled by $u_0u_1u_2$.  This shows that $u_2$ is the label of a run
  from $q ⋅ u_0u_1$ to $q ⋅ u_0u_1u_2$ in~$C$ and that any factor
  of~$X$ is the label of some run in~$C$.  Conversely, each finite word
  labelling a run in~$C$ is also the label of a run in~$𝒜$.  This proves
  that $C$ accepts $X$.
  
  Let $∼$ be the equivalence relation on states of~$\hat{𝒜}$ defined by $P
  ∼ P'$ iff $\clofut{P} = \clofut{P'}$.  The automaton $C/\!\!∼$ is the
  minimal deterministic automaton of~$X$.  As $C/\!\!∼$ is a minimal
  automaton there is a synchronizing word~$u'_1$ such that $q ⋅ u_0u'_1$ is
  also a state in~$C$.
  
  Let $v$ be the word $u_0u'_1$.  Since $u'_1$ is synchronizing, $v$ is
  also synchronizing. The equality $\clofut{q} ∩ vA^ℕ = v\clofut{r'}$ holds
  where $r' = q ⋅ v$.  Since $wv$ is a factor of~$X$, there is some
  state~$r$ of~$C$ such that $wv ∈ \past{r}$.  Since $u'_1$ is
  synchronizing in~$C/\!\!∼$, the states $r$ and~$r'$ satisfy $r ∼ r'$ and
  thus $\clofut{r} = \clofut{r'}$.  It follows that $wv ∈ \past{r}$ and
  $\clofut{q} ∩ vA^ℕ = v\clofut{r}$
\end{proof}

Combining Lemmas \ref{lem:minimal-positive} and~\ref{lem:witness} yields
the following result.
\begin{lemma} \label{lem:automaton-positive}
  Let $𝒜$ be an strongly connected automaton accepting an irreducible sofic
  shift~$X$. Let $q$ be a state of~$𝒜$ and let $w$ be a word such that $w ∈
  \past{q}$.  Let $μ$ be an irreducible rational measure such that $\supp{μ}
  = \fact{X}$.  Then $μ(w\clofut{q}) > 0$.
\end{lemma}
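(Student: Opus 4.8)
The plan is to chain the two preceding lemmas: Lemma~\ref{lem:witness} manufactures a synchronizing word adapted to the pair $(w,q)$, and Lemma~\ref{lem:minimal-positive} turns a synchronizing word into a positivity statement for the measure. Neither lemma applies directly to $w\future{q}$, so the work lies in splicing them together correctly.

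First I would apply Lemma~\ref{lem:witness} with $u = w$. Since $w \in \past{q}$, this produces a state~$r$ of the minimal automaton of~$X$ and a synchronizing word~$v$ with $wv \in \past{r}$ and $\future{q} \cap vA^{\mathbb{N}} = v\future{r}$. Next I would upgrade~$v$ to the longer word~$wv$: because the minimal automaton is deterministic and $v$ already collapses every state from which it is readable onto the single state~$r$, any word ending with~$v$ is again synchronizing, so $wv$ is itself a synchronizing word lying in~$\past{r}$. I could then feed $wv$ and~$r$ into Lemma~\ref{lem:minimal-positive}, which gives $\mu(wv\future{r}) = \mu(wvA^{\mathbb{N}}) > 0$. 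Finally, rewriting $wv\future{r} = w\bigl(\future{q} \cap vA^{\mathbb{N}}\bigr) \subseteq w\future{q}$ and invoking monotonicity of~$\mu$ yields $\mu(w\future{q}) \geq \mu(wv\future{r}) > 0$, which is the claim.

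The step I expect to be the main obstacle is the middle one: verifying that prepending~$w$ to the synchronizing word~$v$ preserves synchronization and that $wv$ still lands in~$\past{r}$, so that Lemma~\ref{lem:minimal-positive} applies verbatim to~$wv$ rather than merely to~$v$. This bookkeeping matters because the target set carries the prefix~$w$, and it is $wv\future{r}$, not $v\future{r}$, that is contained in $w\future{q}$; passing through~$v$ alone would only bound $\mu(v\future{r})$, from which positivity of $\mu(w\future{q})$ does not follow for free. Once this synchronization argument is in place, the remaining steps are a single set inclusion together with monotonicity of the measure, both routine.
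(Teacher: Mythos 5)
Your proposal is correct and follows essentially the same route as the paper: apply Lemma~\ref{lem:witness} with $u=w$, feed the resulting synchronizing word $wv$ and state $r$ into Lemma~\ref{lem:minimal-positive}, and conclude by the inclusion $wv\future{r} \subseteq w\future{q}$. You are slightly more careful than the paper in explicitly justifying that $wv$ is itself synchronizing, a step the paper leaves implicit.
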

\begin{proof}
  By Lemma~\ref{lem:witness}, there exists a state~$r$ of the minimal
  automaton of~$X$ and a synchronizing word~$v$ such that $wv ∈ \past{r}$
  and $\clofut{q} ∩ vA^ℕ = v\clofut{r}$.  This implies that
  $w\clofut{q} ∩ wvA^ℕ = wv\clofut{r}$.  By
  Lemma~\ref{lem:minimal-positive}, the measure $μ(wv\clofut{r})$ is
  positive and thus $μ(w\clofut{q}) > 0$.
\end{proof}

It is a very classical result that not all regular sets of sequences are
accepted by deterministic Büchi automata. This is the reason why Muller
automata with a more involved acceptance condition were introduced.
Landweber's theorem states that a regular set of sequences is accepted by a
deterministic Büchi automaton if and only it is a $G_δ$-set (that is
$Π_0^2$) \cite[Thm~I.9.9]{PerrinPin04}.  This implies in particular that
regular and closed sets\footnote{Not to be confused with \emph{regular
    closed sets} which are equal to the closure of their interior
  \cite[Chap.~4]{Halmos63}.} are accepted by deterministic Büchi automata.
Regular and closed sets are actually accepted by deterministic Büchi
automata in which each state is final \cite[Prop~III.3.7]{PerrinPin04}.

Lemma~\ref{lem:automaton-positive} states that the future of a state in
automaton with all states final has a positive measure.  The following
provides a converse.  It states that a closed set~$F$ with positive measure
contains the future of a state of the minimal automaton, prefixed by some
word~$w$.  The word~$w$ is really needed because the closed set~$F$ can be
arbitrary small.
\begin{lemma} \label{lem:positive-measure}
  Let $X$ be a sofic shift space and let $μ$ be an irreducible rational
  measure such that $\supp{μ} = \fact{X}$.  Let $F$ be a regular and closed
  set contained in~$X$.  If $μ(F) > 0$, there exists a word~$w$ and a
  state~$r$ of the minimal automaton of~$X$ such that $w ∈ \past{r}$ and
  $w\clofut{r} ⊆ F$.
\end{lemma}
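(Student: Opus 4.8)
The plan is to exploit the safety structure of $F$. Since $F$ is regular and closed and contained in $X$, by Landweber's theorem as recalled above it is accepted by a deterministic automaton $\mathcal{B} = \tuple{S,A,\delta,s_0,S}$ in which every state is final; thus a sequence belongs to $F$ if and only if it labels an infinite run of $\mathcal{B}$ from $s_0$, equivalently if the run of $\mathcal{B}$ never falls into an added absorbing dead state~$\bot$. Let $\mathcal{M}$ denote the minimal deterministic automaton of~$X$. I would form the product of $\mathcal{B}$ with $\mathcal{M}$ so that the desired conclusion $w\future{r} \subseteq F$ becomes a purely combinatorial property of the product state reached after reading~$w$: namely, that no $\mathcal{M}$-valid continuation from~$r$ ever kills the $\mathcal{B}$-component.

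To turn the hypothesis $\mu(F) > 0$ into such a product state, I would bring in the rationality of~$\mu$. Fixing a stochastic representation $\tuple{\pi,\nu,\mathds{1}}$, the measure $\mu$ is the law of the symbol sequence emitted by the finite Markov chain on its state set~$P$ whose transition $p \trans{a} p'$ carries probability $\nu(a)_{p,p'}$. Taking the product of this chain with the subset construction of~$\mathcal{M}$ and with $\mathcal{B}$ extended by~$\bot$ yields a finite Markov chain~$\mathcal{Q}$ in which the event $x \in F$ coincides, up to the $\mu$-null event $x \notin X$, with the event of never being absorbed in~$\bot$. The point of passing to~$\mathcal{Q}$ is that absorption is governed by the finite graph of~$\mathcal{Q}$: if $\bot$ were reachable from every reachable state of~$\mathcal{Q}$, then by finiteness absorption would occur within a bounded number of steps with probability bounded below, and hence almost surely, forcing $\mu(F) = 0$.

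Therefore $\mu(F) > 0$ forces the existence of a reachable bottom strongly connected component~$C$ of~$\mathcal{Q}$ from which $\bot$ is unreachable. I would pick a word~$w$ of positive measure that drives~$\mathcal{Q}$ into~$C$ and that is long enough to contain a synchronizing word of~$\mathcal{M}$, so that the $\mathcal{M}$-component has collapsed to a single state~$r$ of the minimal automaton; then $w \in \past{r}$. The claim is that $w\future{r} \subseteq F$: given $y \in \future{r}$, the word~$wy$ lies in $\fact{X} = \support{\mu}$, so reading~$y$ from the $\mathcal{Q}$-state reached by~$w$ follows positive-probability transitions that stay inside~$C$; since $C$ cannot reach~$\bot$, the $\mathcal{B}$-component survives along all of~$wy$, whence $wy \in F$. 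This produces the required word~$w$ and minimal-automaton state~$r$.

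The main obstacle, and the step requiring the most care, is the passage from the hidden-Markov presentation of~$\mu$ to an honest finite Markov chain on which the recurrence argument applies. One must check that the emitted-symbol law is indeed~$\mu$, and, crucially, that the set of positive-probability symbols available from a reached state coincides with the set of $\mathcal{M}$-valid symbols from the corresponding state~$r$; this is exactly where $\support{\mu} = \fact{X}$ enters, and it is what upgrades ``$\mathcal{B}$ survives on a set of full conditional measure'' to the full containment $w\future{r} \subseteq F$. One must also verify that the $\mathcal{M}$-component can be synchronized to a genuine minimal-automaton state before~$\mathcal{Q}$ settles in~$C$, without subsequently leaving~$C$.
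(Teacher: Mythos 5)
Your overall architecture is sound, and it is essentially the probabilistic reading of the paper's own argument: the paper forms the same product of the stochastic representation with a deterministic all-states-final automaton for $F$, introduces the vector $\alpha_{p,q}=\mu_p(\future{q})$ satisfying $\alpha=M\alpha$, and shows that $\mu(F)>0$ forces a recurrent component on which the restricted matrix is stochastic --- which is precisely your ``reachable bottom strongly connected component from which $\bot$ is unreachable''; the strict substochasticity of the transient block is the linear-algebra version of your almost-sure-absorption argument. Up to that point the two proofs coincide in substance.

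The gap is in the final step, where you pass from the product state reached by $w$ to a state $r$ of the minimal automaton with $w\future{r}\subseteq F$. Two things go wrong. First, you cannot make the $\mathcal{M}$-component collapse by taking $w$ ``long enough'': whether a synchronizing word can be read depends on the hidden representation state $p$ that has been reached, and $\support{\mu}=\fact{X}$ is a condition on the measure seen from the initial distribution $\pi$, not on individual states; a reachable $p$ may only emit a proper subset of the continuations that $\fact{X}$ allows. Second, and more seriously, your justification of $w\future{r}\subseteq F$ --- that $wy$ lies in $\support{\mu}$ and hence $y$ can be read along positive-probability transitions from the reached state --- is a non sequitur: positivity of $\mu$ on the prefixes of $wy$ only yields a positive-weight path from \emph{some} initial state of the representation, not from the particular $p$ your trajectory ended in, and the coincidence you invoke between the symbols available from $p$ and the symbols allowed from $r$ in $\mathcal{M}$ is false in general for such hidden presentations. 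What is actually needed is a word $v$ and a minimal-automaton state $r$ with $\future{p}\cap vA^{\mathbb{N}} = v\future{r}$, where $\future{p}$ is taken in the support automaton of the representation; this is exactly the paper's Lemma~\ref{lem:witness}, obtained by determinizing that automaton from $p$ and descending into a recurrent component whose quotient is $\mathcal{M}$. With that lemma your argument closes (every $y\in\future{r}$ then gives a positive-probability trajectory from $p$ staying in $C$, hence never killing the $\mathcal{B}$-component); without it, the containment $w\future{r}\subseteq F$ is unjustified.
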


Before proceeding to the proof of the lemma, we show that even in the case
of the full shift, that is $X = A^ℕ$, both hypothesis of being regular and
closed are necessary.  Since the minimal automaton of the full shift has a
single state~$r$ satisfying $\clofut{r} = A^ℕ$, the lemma can be, in that
case, rephrased as follows.  If $μ(F) > 0$ where $μ$ is the uniform
measure, then there exists a word~$w$ such that $w A^ℕ⊆ F$.

Being regular is of course not sufficient because the set $(0^*1)^ℕ$ of
sequences having infinitely many occurrences of~$1$ is regular and has
measure~$1$ but does not contain any set of the form $wA^ℕ$.  Being closed
is also not sufficient as it is shown by the following example.  Let $X$ be
the set of sequences such that none of their non-empty prefixes of even
length is a palindrome.  The complement of~$X$ is equal to the following
union
\begin{displaymath}
  ⋃_{n ⩾ 1} Z_n
  \quad\text{where}\quad
  Z_n = ⋃_{|w| = n} w\tilde{w}A^ℕ
\end{displaymath}
and where $\tilde{w}$ stands for the reverse of~$w$.  Suppose for instance
that the alphabet is $A = \{0,1\}$.  The measure of~$Z_n$ is equal $2^{-n}$
because there are $2^n$ words of length~$n$ and the measure of each
cylinder $w\tilde{w}A^ℕ$ is $2^{-2n}$.  Furthermore, the set $Z_1 ∪ Z_2$ is
equal to $00A^ℕ ∪ 11A^ℕ ∪ 0110A^ℕ ∪ 1001A^ℕ$ whose measure is $5/8$.  This
shows that the measure of the complement of~$X$ is bounded by
$5/8 + ∑_{n ⩾ 3} 2^{-n} = 7/8$ (Note that this is really an upper bound as
the sets~$Z_n$ are not pairwise disjoint). Therefore $X$ has a positive
measure but it does not contain any cylinder.  Indeed, in each
cylinder~$wA^ℕ$, the cylinder $w\tilde{w}A^ℕ$ is out of~$X$.

\begin{proof}
  Let $\tuple{π,ν,𝟙}$ be a stochastic representation of dimension~$m$ of
  the rational measure~$μ$.  Let $P$ be the set $\{1,…,m\}$.  For each
  $p ∈ P$, we let $μ_p$ be the measure whose representation is
  $\tuple{δ_p,ν,𝟙}$ where the row vector $δ_p$ is given by $(δ_p)_{p'} = 1$
  if $p' = p$ and $0$ otherwise.  The measure $μ$ satisfies the equality
  $μ = ∑_{p = 1}^m π_p μ_p$.
  
  Let $𝒜$ be a deterministic Büchi automaton accepting~$F$ whose state set
  is $Q$.  The unique initial state of~$𝒜$ is~$i$.  Since $F$ is closed, it
  can be assumed that all states of~$𝒜$ are final.  For each state~$q$
  of~$𝒜$, the set $\clofut{q}$ is the set of accepted sequences if $q$ is
  taken as the unique initial state of the automaton.

  We consider a weighted graph~$𝒢$ whose vertex set is $P × Q$.
  The weight of the edge from the vertex $(p,q)$ to the the vertex
  $(p',q')$ is given by
  \begin{displaymath}
    \mathsf{w}_{p,q,p',q'} = ∑_{q \trans{a} q'} ν(a)_{p,p'}
  \end{displaymath}
  where the summation ranges over all transitions $q \trans{a} q'$ in the
  automaton~$𝒜$.  Since $μ(F) > 0$, there exists at least one integer~$p$
  such that $μ_p(F) > 0$.  Without loss of generality, it can be assumed
  that this integer~$p$ is $p = 1$.  The vertex $(1,i)$ where $i$ is the
  initial state~$𝒜$ is called the \emph{initial vertex} of~$𝒢$.  The
  graph~$𝒢$ is restricted to its accessible part from its initial vertex
  $(1,i)$, that is, the set of vertices $(p,q)$ such that there is a path
  from~$(1,i)$ to~$(p,q)$ made of edges with positive weight.  Vertices
  which are not accessible from~$(1,i)$ are ignored in the rest of this
  proof.

  With each vertex $(p,q)$ of~$𝒢$ is associated the real number
  $α_{p,q} = μ_p(\clofut{q})$.  Let $α$ be the row $P × Q$-vector whose
  entries are the numbers~$α_{p,q}$.  Let $M$ be the matrix of weights of
  the graph~$𝒢$: the $((p,q),(p',q'))$-entry of~$M$ is the weight
  $\mathsf{w}_{p,q,p',q'}$ defined above.  The matrix~$M$ and the
  vector~$α$ satisfy the equality $α = Mα$.  This latter equality comes
  first from the equality
  \begin{displaymath}
    \clofut{q} = ⨄_{q \trans{a} q'} a\clofut{q'}
  \end{displaymath}
  for each
  state~$q$ of~$𝒜$ where $⊎$ stands for disjoint union and second
  from the equality
  \begin{displaymath}
    μ_p(aF) = ∑_{p' = 1}^m ν(a)_{p,p'}μ_{p'}(F)
  \end{displaymath}
  for each $p ∈ P$, each symbol~$a$ and each measurable set~$F$.

  We claim that if $(p,q)$ belongs to a terminal strongly connected
  component~$C$ of~$𝒢$ and $α_{p,q} > 0$, then $\clofut{p} ⊆ \clofut{q}$.
  Let $α'$ and $M'$ be the restrictions of~$α$ and~$M$ respectively to the
  vertices in~$C$.  Because $C$ is terminal, the equality $α' = M'α'$
  holds.  If the matrix~$M'$ is strictly substochastic, this latter equality
  implies that $α$ is the zero vector and this would contradict
  $α_{p,q} > 0$.  The matrix~$M'$ is then stochastic.  The sum of the
  elements of the $(p,q)$-row of the matrix~$M'$ is equal to
  \begin{displaymath}
    ∑_{p',q'} \mathsf{w}_{p,q,p',q'} = ∑_{q \trans{a} q'} ∑_{p' = 1}^m ν(a)_{p,p'}.
  \end{displaymath}
  Since the automaton~$𝒜$ is deterministic the subset $q ⋅ a$ is either the
  empty set or a singleton set $\{ q' \}$.  This means that $q$ and~$a$
  being fixed, there is at most one choice for~$q'$.  Let us denote by
  $β_{p,a}$ the sum $∑_{p' = 1}^m ν(a)_{p,p'}$ so that
  \begin{displaymath}
    ∑_{p',q'} \mathsf{w}_{p,q,p',q'} = ∑_{q \trans{a} q'}{β_{p,a}}.
  \end{displaymath}
  Since the matrix $∑_{a ∈ A}ν(a)$ is stochastic, the sum
  $∑_{a ∈ A}{β_{p,a}}$ is equal to~$1$.  The sum
  $∑_{p',q'} \mathsf{w}_{p,q,p',q'}$ is thus equal to~$1$ if for each
  symbol~$a$, $β_{p,a} > 0$ implies that $q ⋅ a$ is not empty.  We claim
  that if $M'$ is stochastic, then $\clofut{p} ⊆ \clofut{q}$ for each
  vertex $(p,q)$ in~$C$.  Let $x = a_1a_2a_3⋯$ be sequence in~$\clofut{p}$.
  Then there exists a sequence $p = p_0,p_1,p_2,…$ in $P^ℕ$ such that
  $ν(a_i)_{p_{i-1},p_i} > 0$ for each $i ⩾ 1$.  This last relation implies
  that $β_{p_{i-1},a_i} > 0$.  There exists then a (unique) sequence
  $q = q_0,q_1,q_2,…$ of states of~$𝒜$ such that $q_{i+1} = q_i ⋅ a_{i+1}$.
  This completes the proof of the claim.

  Now we complete the proof.  Let $V_1$ (respectively, $V_2$) be the set of
  vertices in a non-terminal (respectively, terminal) strongly connected
  component of~$𝒢$.  We write $α = (\bar{α}_1,\bar{α}_2)$ where the vectors
  $\bar{α}_1$ and~$\bar{α}_2$ are respectively
  $\bar{α}_1 = (α_v)_{v ∈ V_1}$ and $\bar{α}_2 = (α_v)_{v ∈ V_2}$.  The
  relation $α = Mα$ is equivalent to the relations
  \begin{align*}
    \bar{α}_1 & = M_1\bar{α}_1 + M_3\bar{α}_2 \\
    \bar{α}_2 & = M_2\bar{α}_2
  \end{align*}
  where $M_1$ (respectively, $M_2$) is the restriction of~$M$ to rows and
  columns indexed by~$V_1$ (respectively, $C_2$) and $M_3$ is the
  restriction of~$M$ to rows indexed by~$V_1$ and columns indexed by~$V_2$.
  Since there is at least one transition from~$Q_1$ to~$Q_2$, the
  matrix~$M_1$ is strictly substochastic and its spectral radius is
  strongly less than~$1$.  The matrix $I - M_1$ is thus invertible. The
  first relation is thus equivalent to
  \begin{displaymath}
    \bar{α}_1 = (I - M_1)^{-1}M_3\bar{α}_2.
  \end{displaymath}
  This last equality shows that $\bar{α}_2 = 0$ implies $\bar{α}_1 = 0$ and
  thus $α = 0$.  Let $(p,q)$ be a vertex in~$V_2$ such that $α_{p,q} > 0$
  and thus $\clofut{p} ⊆ \clofut{q}$.
  
  There is then a path from $(1,i)$ to the state $(p,q)$.  There exists a
  word~$u$ such that $ν(u)_{1,p} > 0$ and $i \trans{u} q$ in~$𝒜$.  By
  Lemma~\ref{lem:witness}, there is a word~$v$ and a state~$r$ of the
  minimal automaton of~$X$ such that $uv ∈ \past{r}$ and $\clofut{p} ∩ vA^ℕ
  = v\clofut{r}$.  Note that applying Lemma~\ref{lem:witness} requires that
  the rational measure is iredducible.  Thus $v\clofut{r} ⊆ \clofut{q}$ and
  $uv\clofut{r} ⊆ u\clofut{q} ⊆ F$.  Setting $w = uv$ gives the result.
\end{proof}

The following result is trivially true when the measure~$μ$ is shift
invariant because $μ(F) = ∑_{|w|=m}μ(wF)$ but it does not hold in general.
\begin{lemma} \label{lem:add-prefix}
  Let $X$ be a sofic shift space and let $μ$ be an irreducible rational
  measure such that $\supp{μ} = \fact{X}$.  Let $F$ be a regular and closed
  set contained in~$X$.  If $μ(F) = 0$, then $μ(wF) = 0$ for each finite
  word~$w$.
\end{lemma}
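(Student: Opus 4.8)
The plan is to prove the contrapositive: I assume $\mu(wF) > 0$ for some word~$w$ and derive $\mu(F) > 0$. First I would replace $wF$ by $G = wF \cap X$. Prepending the fixed word~$w$ to the $\omega$-regular set~$F$ keeps it $\omega$-regular, and since $A^{\mathbb{N}}$ is compact the continuous injective image $wF$ of the closed set~$F$ is again closed; intersecting with the sofic (hence regular and closed) shift~$X$ therefore produces a regular and closed set $G \subseteq X$. As in the proof of Lemma~\ref{lem:minimal-positive}, the hypothesis $\support{\mu} = \fact{X}$ gives $\mu(X) = 1$, so $\mu(G) = \mu(wF) > 0$. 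This places $G$ in the scope of Lemma~\ref{lem:positive-measure}.

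Applying Lemma~\ref{lem:positive-measure} to~$G$ yields a word~$w_0$ and a state~$r$ of the minimal automaton of~$X$ with $w_0 \in \past{r}$ and $w_0\future{r} \subseteq G \subseteq wF$. Every sequence in $w_0\future{r}$ begins with both $w_0$ and $w$, so one of these two words is a prefix of the other. If $w$ is a prefix of $w_0$, write $w_0 = wt$; cancelling the common prefix~$w$ (the map $x \mapsto wx$ is injective) turns $w_0\future{r} \subseteq wF$ into $t\future{r} \subseteq F$, and factoring the run witnessing $wt \in \past{r}$ through the state reached after reading~$w$ shows $t \in \past{r}$, using that every state of the minimal automaton is initial. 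If instead $w_0$ is a proper prefix of $w$, write $w = w_0 s$; cancelling~$w_0$ gives $\future{r} \subseteq sF \subseteq sA^{\mathbb{N}}$, so every sequence of $\future{r}$ starts with~$s$. Since the minimal automaton is deterministic and $\future{r} \neq \emptyset$ (every state of the minimal automaton has a nonempty future), reading~$s$ from~$r$ leads to a unique state $r' = r \cdot s$ with $\future{r} = s\future{r'}$; cancelling~$s$ then gives $\future{r'} \subseteq F$, while $\emptyword \in \past{r'}$ holds trivially.

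In both cases I have exhibited a state~$s^{*}$ of the minimal automaton and a word $t^{*} \in \past{s^{*}}$ with $t^{*}\future{s^{*}} \subseteq F$. Lemma~\ref{lem:automaton-positive} then gives $\mu(t^{*}\future{s^{*}}) > 0$, whence $\mu(F) > 0$, completing the contrapositive. The routine verifications are that $G$ is regular and closed and that $\mu(X) = 1$; the genuinely delicate step is the second case, where I must pull the cylinder back through the common prefix~$s$ and invoke determinism of the minimal automaton to recover the state $r' = r \cdot s$ and the identity $\future{r} = s\future{r'}$. This asymmetry between the two prefix cases is exactly where the failure of shift-invariance makes itself felt, so it is the part I would write out most carefully.
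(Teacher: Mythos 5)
Your proof is correct and follows essentially the same route as the paper's: prove the contrapositive, apply Lemma~\ref{lem:positive-measure} to $wF$ to obtain a cylinder $w_0\future{r} \subseteq wF$, split according to which of $w$ and $w_0$ is a prefix of the other, and conclude with Lemma~\ref{lem:automaton-positive}. Your additional step of replacing $wF$ by $wF \cap X$ (so that the containment hypothesis of Lemma~\ref{lem:positive-measure} is literally met, using $\mu(X)=1$) is a small refinement that the paper's own proof glosses over, but it does not change the argument.
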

\begin{proof}
  We prove that $μ(wF) > 0$ implies $μ(F) > 0$.  Suppose that $μ(wF) > 0$.
  Since $wF$ is also regular and closed, there exists, by
  Lemma~\ref{lem:positive-measure}, a word~$u$ and a state~$r$ of the
  minimal automaton of~$X$ such that $u ∈ \past{r}$ and $u\clofut{r} ⊆ wF$.
  This latter inclusion implies that either $u$ is a prefix of~$w$ or $w$
  is a prefix of~$u$.  In the first case, that is $w = uv$ for some
  word~$v$, the inclusion is equivalent to $\clofut{r} ⊆ vF$.  Let $s$ be
  state such that $r \trans{v} s$.  Then $v\clofut{s}$ is contained in
  $\clofut{r}$ and thus $\clofut{s} ⊆ F$.  By
  Lemma~\ref{lem:automaton-positive}, $μ(\clofut{s}) > 0$ and thus
  $μ(F) > 0$.  In the second case, that is $u = wv$, for some~$v$, the
  inclusion is equivalent to $v\clofut{r} ⊆ F$.  Again by
  Lemma~\ref{lem:automaton-positive}, $μ(v\clofut{r}) > 0$ and thus
  $μ(F) > 0$.
\end{proof}

The following lemma is an extension to pairs of futures of states of the
result of Lemma~\ref{lem:minimal-positive} for one state.
\begin{lemma} \label{lem:same-cylinder}
  Let $𝒜$ be strongly connected automaton accepting a shift space~$X$.  Let
  $μ$ be an irreducible rational measure such that $\supp{μ} = \fact{X}$.
  Let $q$ and $q'$ two states of~$𝒜$ such that $μ(\clofut{q} ∩ \clofut{q'})
  > 0$.  Then there exists a word~$w$ such that $\clofut{q} ∩ wA^ℕ =
  \clofut{q'} ∩ wA^ℕ$.
\end{lemma}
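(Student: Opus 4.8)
The plan is to produce a single cylinder on which both futures collapse to one and the same ``solid'' set coming from the minimal automaton of~$X$. Since $\mathcal{A}$ is strongly connected (hence trim) and accepts the shift space~$X$, I may assume $I = F = Q$, so every run from a state is accepting and therefore $\future{q}, \future{q'} \subseteq X$. I would set $F = \future{q} \cap \future{q'}$ and note that each $\future{q}$ is regular (it is accepted by $\mathcal{A}$ taking $q$ as sole initial state) and closed (by König's lemma, $x \in \future{q}$ iff every finite prefix of~$x$ labels a finite run from~$q$), so that $F$ is a regular and closed subset of~$X$ with $\mu(F) > 0$. Applying Lemma~\ref{lem:positive-measure} to~$F$ then yields a word~$w_0$ and a state~$r_0$ of the minimal automaton of~$X$ with $w_0 \in \past{r_0}$ and $w_0\future{r_0} \subseteq F$.

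The heart of the argument is the observation that the conclusion follows as soon as one has a \emph{synchronizing} word~$w$ together with a state~$r$ of the minimal automaton such that $w \in \past{r}$ and $w\future{r} \subseteq F$. Indeed, because $w$ is synchronizing every run labelled~$w$ ends in the single state~$r$, so a sequence of~$X$ beginning with~$w$ is exactly $w$ followed by an element of~$\future{r}$; that is, $X \cap wA^{\mathbb{N}} = w\future{r}$. Using $w\future{r} \subseteq F \subseteq \future{q}$ and $\future{q} \subseteq X$, I then get the chain $w\future{r} \subseteq \future{q} \cap wA^{\mathbb{N}} \subseteq X \cap wA^{\mathbb{N}} = w\future{r}$, whence $\future{q} \cap wA^{\mathbb{N}} = w\future{r}$, and symmetrically $\future{q'} \cap wA^{\mathbb{N}} = w\future{r}$. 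The two cylinder restrictions coincide, which is precisely what is required.

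The main obstacle is that Lemma~\ref{lem:positive-measure} offers no control over synchronization, so $w_0$ need not be synchronizing and in general $X \cap w_0A^{\mathbb{N}}$ is strictly larger than $w_0\future{r_0}$. I would overcome this by extending~$w_0$, using that $X$ is irreducible and hence its minimal automaton is strongly connected and carries a synchronizing word~$s$. Choosing a state~$p$ from which $s$ is readable and, by strong connectivity, a word~$t_1$ with $r_0 \cdot t_1 = p$, I set $t = t_1 s$, $w = w_0 t$, and $r = r_0 \cdot t$. Since $w$ ends in the synchronizing word~$s$, it is itself synchronizing, and $w \in \past{r}$ because a run for~$w_0$ ending at~$r_0$ extends by $t_1$ to~$p$ and by~$s$ to~$r$. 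Finally $r_0 \cdot t = r$ gives $t\future{r} \subseteq \future{r_0}$, so $w\future{r} = w_0 t\future{r} \subseteq w_0\future{r_0} \subseteq F$. Thus $w$ and~$r$ meet the hypotheses of the key observation, and the proof is finished.
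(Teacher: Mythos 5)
Your proposal is correct and follows essentially the same route as the paper's proof: apply Lemma~\ref{lem:positive-measure} to the closed regular set $\future{q} \cap \future{q'}$, extend the resulting word by a synchronizing word of the minimal automaton, and identify both cylinder restrictions with $w\future{r}$. The only cosmetic difference is that you route the final identification through $X \cap wA^{\mathbb{N}} = w\future{r}$ while the paper argues the two inclusions directly.
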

\begin{proof}
  We claim that there exists a word~$w$ and a state~$r$ of the minimal
  automaton of~$X$ such that $w ∈ \past{r}$ and
  \begin{displaymath}
    \clofut{q} ∩ wA^ℕ =  \clofut{q'} ∩ wA^ℕ = w\clofut{r}.
  \end{displaymath}
  
  Let $F$ be the closed set $\clofut{q} ∩ \clofut{q'}$.  By
  Lemma~\ref{lem:positive-measure} applied to~$F$, there exists a word~$u$
  and a state~$s$ of the minimal automaton of~$X$ such that $u ∈ \past{s}$
  and
  \begin{align*}
    u\clofut{s} & ⊆ \clofut{q} \\
    u\clofut{s} & ⊆ \clofut{q'} 
  \end{align*}
  Let $v$ be a synchronizing word of the minimal automaton of~$X$ such that
  $s ⋅ v$ is not empty.  Let $w$ be the word $uv$ and let $r$ be the state
  $s ⋅ v$.  Since $u ∈ \past{s}$ and $r = s ⋅ v$, $w ∈ \past{r}$.  We claim
  that $\clofut{q} ∩ wA^ℕ = w\clofut{r}$.  Suppose first that $x$ belongs
  to $\clofut{q} ∩ wA^ℕ$.  The sequence~$x$ is then equal to $wx'$ for some
  sequence~$x'$ and it is the label of a run in the minimal automaton
  of~$X$.  Since $w = uv$ and $v$ is synchronizing, the sequence~$x'$ must
  belong to $\clofut{r}$.  Suppose conversely that $x$ belongs
  to~$w\clofut{r}$.  It is then equal to $uvx'$ for some $x'$ in
  $\clofut{r}$.  Since $r = s ⋅ v$, $vx' ∈ \clofut{s}$. It follows from the
  inclusion $u\clofut{s} ⊆ \clofut{q}$ that $x$ belongs to $\clofut{q}$.
  This completes the proof of the equality
  $\clofut{q} ∩ wA^ℕ = w\clofut{r}$.  By symmetry, the equality
  $\clofut{q'} ∩ wA^ℕ = w\clofut{r}$ also holds and the proof is completed.
\end{proof}

This last lemma establishes a link between unambiguity of an automaton and
measures of futures of its states.  More precisely, it states that the
future of two states that can be reached from the same state and reading
the same word have an intersection of zero measure.  Its proof is more
combinatorial than previous ones.
\begin{lemma} \label{lem:ambiguity2}
  Let $𝒜$ be an unambiguous strongly connected automaton accepting a shift
  space~$X$.  Let $μ$ be an irreducible rational measure such that
  $\supp{μ} = \fact{X}$.  If there are two runs $p \trans{u} q$ and $p
  \trans{u} q'$, with $q ≠ q'$, then $μ(\clofut{q} ∩ \clofut{q'}) = 0$.
\end{lemma}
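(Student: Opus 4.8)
The plan is to argue by contraposition, assuming throughout that $𝒜$ is unambiguous. This hypothesis is really needed here: the claim already fails for the automaton of Figure~\ref{fig:example-ambiguous}, where the branch $2\trans{0}1$, $2\trans{0}2$ gives $\future{1}\cap\future{2}=0X$, of positive measure. So the target is to show that $μ(\future{q}\cap\future{q'})>0$ forces a genuine \emph{finite} ambiguity, namely two distinct runs from $p$ to one and the same state carrying the same label, which contradicts unambiguity. Assume then $μ(\future{q}\cap\future{q'})>0$.

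First I would feed this positivity into Lemma~\ref{lem:same-cylinder}, which produces a word $w$ with $\future{q}\cap wA^{\mathbb N}=\future{q'}\cap wA^{\mathbb N}$. Its proof actually provides more: $w$ may be taken to end with a synchronizing word of the minimal automaton, and the common set equals $w\future{r}$ for a state $r$ of that automaton. Because the synchronizing tail sends every state of $q\cdot w$ and of $q'\cdot w$ to the single future-class $r$, each state in $q\cdot w\cup q'\cdot w$ has future exactly $\future{r}$. The set being nonempty, $q\cdot w$ and $q'\cdot w$ are nonempty, so I fix states $a\in q\cdot w$ and $b\in q'\cdot w$; then $\future{a}=\future{b}=\future{r}$, and prepending the branch yields runs $p\trans{uw}a$ through $q$ and $p\trans{uw}b$ through $q'$.

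The heart of the argument is to manufacture from $a$ and $b$ a commonly reachable state. By strong connectivity pick a word $t$ with a run $b\trans{t}p$, and set $κ=tuw$. Reading $κ$ from $b$ first returns to $p$ and then takes either branch, so $\{a,b\}\subseteq b\cdot κ$. The key observation is that, since $b\in b\cdot κ$, the sets $b\cdot κ^{\,n}$ are non-decreasing: $b\cdot κ^{\,n+1}=(b\cdot κ)\cdot κ^{\,n}\supseteq b\cdot κ^{\,n}$. In a finite automaton they stabilise to a $κ$-invariant set $K\supseteq\{a,b\}$. As $\future{a}=\future{b}$, the word $κ$ (a prefix of a sequence in $\future{b}=\future{a}$) is readable from $a$, so $a\cdot κ^{\,n}\neq\emptyset$; and since $a\in K$ with $K\cdot κ=K$ we get $a\cdot κ^{\,n}\subseteq K=b\cdot κ^{\,n}$ for all large $n$. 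Hence $a\cdot κ^{\,n}\cap b\cdot κ^{\,n}\neq\emptyset$, giving a state $c$ with runs $a\trans{κ^{\,n}}c$ and $b\trans{κ^{\,n}}c$.

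Prepending the branch then produces two runs $p\trans{u}q\trans{w}a\trans{κ^{\,n}}c$ and $p\trans{u}q'\trans{w}b\trans{κ^{\,n}}c$, both from $p$ to $c$, both labelled $uwκ^{\,n}$, and distinct because at position $|u|$ they sit at $q\neq q'$. This contradicts unambiguity and finishes the proof. The step I expect to be the genuine obstacle is spotting the gadget $κ=tuw$: it is engineered so that $b$ lies in its own one-step image, which is exactly what makes the orbit $b\cdot κ^{\,n}$ monotone and hence able to swallow the orbit of $a$. Strong connectivity is used only to return to $p$, the branch only to re-create both targets from $p$, and the positivity hypothesis only to secure, via Lemma~\ref{lem:same-cylinder}, the two states $a,b$ sharing a common future from which $κ$ can be read.
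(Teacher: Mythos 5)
Your strategy is sound and genuinely different from the paper's (which feeds the periodic sequence $(wuv)^{\mathbb{N}}$ into the equality of futures and then pigeonholes along the resulting infinite run), and you are right that unambiguity must be assumed --- the paper's own proof also invokes it even though it is absent from the statement. But there is one false step, and you use it essentially. You claim that because $w$ ends in a synchronizing word, every state of $q\cdot w\cup q'\cdot w$ has future exactly $\future{r}$, so that $\future{a}=\future{b}$ for \emph{arbitrary} $a\in q\cdot w$ and $b\in q'\cdot w$. The synchronizing word synchronizes the \emph{minimal} automaton, not $\mathcal{A}$; what Lemma~\ref{lem:same-cylinder} actually yields is only $\bigcup_{a\in q\cdot w}\future{a}=\bigcup_{b\in q'\cdot w}\future{b}=\future{r}$, and an individual $\future{a}$ may be a proper subset of $\future{r}$. (In the third automaton of Figure~\ref{fig:examples}, $1\cdot 0=\{1,3\}$ with $\future{1}=A^{\mathbb{N}}$ but $\future{3}\neq A^{\mathbb{N}}$, even though $0$ is synchronizing for the minimal automaton of the full shift.) The step that breaks is precisely $a\cdot\kappa^{\,n}\neq\emptyset$: if $a$ is a state of $q\cdot w$ from which $\kappa$ cannot be read, the intersection $a\cdot\kappa^{\,n}\cap b\cdot\kappa^{\,n}$ is empty and no pair of runs, hence no contradiction, is produced.

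The repair stays entirely within your argument: fix $b\in q'\cdot w$ and $\kappa=tuw$ first, observe that $\kappa^{\mathbb{N}}\in\future{b}\subseteq\future{r}=\bigcup_{a_i\in q\cdot w}\future{a_i}$, and only then choose $a$ to be an $a_i$ with $\kappa^{\mathbb{N}}\in\future{a_i}$; equivalently, drop $a$ altogether and work with the sets $q\cdot w\kappa^{\,n}$ and $q'\cdot w\kappa^{\,n}$, noting that $w\kappa^{\mathbb{N}}\in w\future{r}=\future{q}\cap wA^{\mathbb{N}}$ makes $q\cdot w\kappa^{\,n}$ nonempty while $q\cdot w\subseteq b\cdot\kappa$ forces it into the stabilized set $K=b\cdot\kappa^{\,n}\subseteq q'\cdot w\kappa^{\,n}$. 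With that reordering your monotonicity device ($b\in b\cdot\kappa$, hence $b\cdot\kappa^{\,n}$ increases to a $\kappa$-invariant $K$) works and is arguably cleaner than the paper's pigeonhole on the run labelled $(wuv)^{\mathbb{N}}$; the two proofs share the same ingredients (a return word to $p$, the branch at $p$, and the equality of the unions of futures) but differ in their combinatorial core.
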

\begin{proof}
  Suppose by contradiction that $μ(\clofut{q} ∩ \clofut{q'}) > 0$.  There
  exists, by Lemma~\ref{lem:same-cylinder}, a word~$v$ such that
  $\clofut{q} ∩ vA^ℕ = \clofut{q'} ∩ vA^ℕ$.  Let $q ⋅ v$ (respectively
  $q' ⋅ v$) be the set $\{ q_1,…,q_r \}$ (respectively
  $\{q'_1,…,q'_{r'}\}$). Since $\clofut{q} ∩ vA^ℕ = \clofut{q'} ∩ vA^ℕ$,
  the equality
  $\clofut{q_1} ∪ ⋯ ∪ \clofut{q_r} = \clofut{q'_1} ∪ ⋯ ∪ \clofut{q'_{r'}}$
  holds. Since the automaton is strongly connected, there is a run
  $q_1 \trans{w} p$ from~$q_1$ to~$p$.  Combining this run with the run
  $p \trans{u} q \trans{v} q_1$ yields the cyclic run
  $q_1 \trans{wuv} q_1$.  Since
  $\clofut{q_1} ∪ ⋯ ∪ \clofut{q_r} = \clofut{q'_1} ∪ ⋯ ∪ \clofut{q'_{r'}}
  $, the sequence $(wuv)^ℕ = wuvwuv⋯$ belongs to a set~$\clofut{q'_i}$ for
  some $1 ⩽ i ⩽ r'$.  By symmetry, it can be assumed that
  $(wuv)^ℕ ∈ \clofut{q'_1}$.  There exists then a run starting from~$q'_1$
  with label $(wuv)^ℕ$.  This run can be decomposed
  \begin{displaymath}
    q'_1 \trans{wuv} p_1 \trans{wuv} p_2 \trans{wuv} p_3 ⋯.
  \end{displaymath}
  Since there are finitely many states, there are two integers $k,ℓ ⩾ 1$
  such that $p_k = p_{k+ℓ}$.  There are then the following two runs
  from~$p$ to $p_k = p_{k+ℓ}$ with the same label $(uvw)^{k+ℓ}uv$.
  \begin{align*}
    p & \trans{u} q \trans{v} q_1 \trans{(wuv)^{k-1}} q_1
        \trans{w} p \trans{uv} q'_1 \trans{(wuv)^ℓ} p_k \\
    p & \trans{u} q' \trans{v} q'_1 \trans{(wuv)^{k+ℓ}} p_{k+ℓ} 
  \end{align*}
  This is a contradiction with the fact that $𝒜$ is unambiguous.
\end{proof}

\begin{proof}[Proof of Theorem~\ref{thm:measure}]
  
  We first prove that (i) implies (ii).  suppose that the automaton~$𝒜$ is
  unambiguous.  We show that $μ(\clobifut{p}) = 0$ for each state~$p$.  We
  start by a decomposition of the set~$\clobifut{p}$.  Let $x = a_1a_2a_3⋯$
  be a sequence in~$\clobifut{p}$ and let $ρ$ and $ρ'$ be the two different
  runs labelled by~$x$.  Suppose that
  \begin{align*}
    ρ & = q_0 \trans{a_1} q_1 \trans{a_2} q_2 \trans{a_3} q_3 ⋯ \\
    ρ' & = q'_0 \trans{a_1} q'_1 \trans{a_2} q'_2 \trans{a_3} q'_3 ⋯ 
  \end{align*}
  where $q_0 = q'_0 = p$.  Let $n$ be the least integer such that
  $q_n ≠ q'_n$.  Let $a$ be the symbol~$a_n$, $w$ be the prefix
  $a_1 ⋯ a_{n-1}$ and $x'$ be the tail $a_{n+1}a_{n+2}a_{n+3}⋯$.  The
  sequence~$x$ is equal to $wax'$ and there is a finite run
  $q_0 \trans{w} q_{n-1}$, two transitions $q_{n-1} \trans{a} q_n$ and
  $q_{n-1} \trans{a} q'_n$, and the tail~$x'$ belongs to the intersection
  $\clofut{q_n} ∩ \clofut{q'_n}$.  We have actually proved the following
  equality expressing $\clobifut{p}$ in term of a union of intersections of
  sets~$\clofut{q}$.
  \begin{displaymath}
    \clobifut{p} = ⋃_{\begin{smallmatrix}
               p \trans{w} p' \\
               p' \trans{a} q \\
               p' \trans{a} q' 
             \end{smallmatrix}}
            wa(\clofut{q} ∩ \clofut{q'})
  \end{displaymath}
  Since the union is countable, it suffices to prove that
  if there are two transitions $p \trans{a} q$ and $p \trans{a} q'$ with
  $q ≠ q'$, then $μ(\clofut{q} ∩ \clofut{q'}) = 0$.
  Lemma~\ref{lem:ambiguity2} and Lemma~\ref{lem:add-prefix} allow us to
  conclude.
  
  The fact that (ii) implies (iii) is clear because the set~$\bifut{q}$ is
  contained in~$\clobifut{q}$ for each state~$q$ of~$𝒜$.
  
  We now prove that (iii) implies (i).  Let $q$ be state of~$𝒜$ and suppose
  that there are two different runs from state~$p$ to state~$r$ with the
  same label~$w$.  Let $v$ the label of a run from~$q$ to~$p$.  This shows
  that $vw\future{r} ⊆ \bifut{q}$.  Since $vw ∈ \past{r}$, the measure
  $μ(vw\future{r})$ satisfies $μ(vw\future{r}) = μ(vw\clofut{r})$ by
  Proposition~\ref{pro:same-measure}.  The measure $μ(vw\clofut{r})$
  satisfies $μ(vw\clofut{r}) > 0$ by Lemma~\ref{lem:positive-measure} and
  thus $μ(\bifut{q}) > 0$.  This completes the proof of this implication.
\end{proof}

\section*{Conclusion}

As a conclusion, we would like the emphasize the difficulty of extending
the result to non strongly connected automata.  Consider the two automata
pictured in Figure~\ref{fig:non-connected}.  The leftmost one in
unambiguous where as the rightmost one is obviously ambiguous for finite
words.  However, $\bifut{0} = \clobifut{0} = 0^*1^ℕ$ and $\bifut{q} = ∅$
hold for $q ≠ 0$ in both automata.  The only way to distinguish one
automaton from the other one is to have two different measures.  In order
to have $μ_1(\bifut{0}) = 0$ for the leftmost automaton, the measure~$μ_1$
should put all the weight on $0^ℕ$: $μ_1(0^ℕ) = 1$.  In order to have
$μ_2(\bifut{0}) > 0$ for the rightmost automaton, the measure~$μ_2$ should
put some weight on a sequence $0^n1^ℕ$ for some integer $n ⩾ 0$.  It is not
clear why the measures should be different because the set~$0^*1^*$ of
finite words labelling some run is the same in both automata.

\begin{figure}[htbp]
  \begin{center}
  \begin{tikzpicture}[>=stealth',initial text=,auto,inner sep=1pt]
    \tikzstyle{every state}=[semithick,minimum size=0.4]
    \begin{scope}
      \node[state,initial left] (q0) at (0,0.4) {$0$};
      \node[state,accepting] (q1) at (1.5,0.8) {$1$};
      \node[state,accepting] (q2) at (1.5,0) {$2$};
      \path[->] (q0) edge[out=120,in=60,loop] node {$0$} (q0);
      \path[->] (q0) edge node[pos=0.6] {$1$} (q1);
      \path[->] (q0) edge node[pos=0.6,swap] {$1$} (q2);
      \path[->] (q1) edge[out=30,in=-30,loop] node {$1$} (q1);
      \path[->] (q2) edge[out=30,in=-30,loop] node {$1$} (q2);
    \end{scope}
    \begin{scope}[xshift=150]
      \node[state,initial left] (q0) at (0,0.4) {$0$};
      \node[state,accepting] (q1) at (1.5,0.8) {$1$};
      \node[state,accepting] (q2) at (1.5,0) {$2$};
      \node[state,accepting] (q3) at (3,0.4) {$3$};
      \path[->] (q0) edge[out=120,in=60,loop] node {$0$} (q0);
      \path[->] (q0) edge node[pos=0.6] {$1$} (q1);
      \path[->] (q0) edge node[pos=0.6,swap] {$1$} (q2);
      \path[->] (q1) edge node[pos=0.4] {$1$} (q3);
      \path[->] (q2) edge node[pos=0.4,swap] {$1$} (q3);
      \path[->] (q3) edge[out=30,in=-30,loop] node {$1$} (q3);
    \end{scope}
  \end{tikzpicture}
  \end{center}
  \caption{Two non strongly connected automata}
  \label{fig:non-connected}
\end{figure}
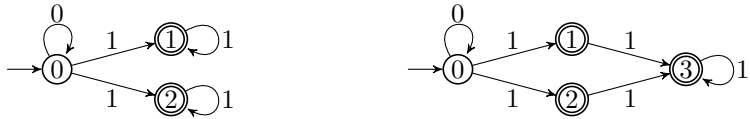

\bibliographystyle{plain}
\bibliography{measure}

\end{document}